\documentclass{article}

\usepackage{graphicx,amsmath,amsfonts,verbatim,amsthm,color,soul,xcolor,color,cleveref,authblk}

\graphicspath{{pictures/}}

\usepackage{tikz}
\usepackage{subcaption}
\usepackage{mathtools}
\usepackage[labelformat=simple]{subcaption}

\newtheorem{theorem}{Theorem}[section]
\newtheorem*{theorem*}{Theorem}
\newtheorem{proposition}[theorem]{Proposition}
\newtheorem{definition}[theorem]{Definition}
\newtheorem{lemma}[theorem]{Lemma}

\theoremstyle{definition}

\newtheorem{example}[theorem]{Example}
\newtheorem{question}[theorem]{Question}

\newcommand{\Z}{{\mathbb Z}}

\newcommand{\F}{{\mathbb F}}
\newcommand{\Proj}{{\mathbb P}}

\newcommand{\nathan}[1]{{ \sf $\heartsuit\heartsuit\heartsuit$ Nathan: [#1]}}


\begin{document}
\title{Representations of the Multicast Network Problem\thanks{The authors would like to acknowledge the hospitality of IPAM and the organizers of its Algebraic Geometry for Coding Theory and Cryptography Workshop: Everett Howe, Kristin Lauter, and Judy Walker.} \thanks{The third author was supported by NSA Young Investigator Grant H98230-16-10305 and by an AMS-Simons Travel Grant. The fourth author was partially supported by the CONACyT under grant title "Network Codes". The fifth uthor was partially supported by the National Science Foundation under grant DMS-1547399}}
\author[1]{Sarah E. Anderson}
\affil[1]{University of St. Thomas} 

\author[2]{Wael Halbawi} 
\affil[2]{California Institute of Technology}

\author[3]{Nathan Kaplan } 
\affil[3]{University of California, Irvine}

\author[4]{Hiram H. L\'opez}  
\author[4]{Felice Manganiello}
\affil[4]{Clemson University}

\author[5]{Emina Soljanin}
\affil[5]{Rutgers University}

\author[6]{Judy Walker} 
\affil[6]{University of Nebraska--Lincoln}

\date{\today}

\maketitle

\begin{abstract}

We approach the problem of linear network coding for multicast networks from different perspectives. We introduce the notion of the coding points of a network, which are edges of the network where messages combine and coding occurs. We give an integer linear program that leads to choices of paths through the network that minimize the number of coding points. We introduce the code graph of a network, a simplified directed graph that maintains the information essential to understanding the coding properties of the network.  One of the main problems in network coding is to understand when the capacity of a multicast network is achieved with linear network coding over a finite field of size $q$.  We explain how this problem can be interpreted in terms of rational points on certain algebraic varieties.
\end{abstract}

\section{Introduction}

A combinational  network, or simply network, is represented by 
a directed acyclic graph $G = (\mathcal{V},\mathcal{E},\mathcal{S},\mathcal{R},\mathcal{S},\F_q)$ where:
\begin{itemize}
\item $\mathcal{V}$ is the vertex set.
\item $\mathcal{E}$ is the set of unit-capacity directed edges.
\item $\mathcal{S}\subset \mathcal{V}$ is the source set, meaning the set of vertices with in-degree $0$.
\item  $\mathcal{R}\subset \mathcal{V}$ is the receiver set, meaning the set of vertices with out-degree $0$ and without loss of generality, we assume $\mathcal{S}\cap \mathcal{R}=\emptyset$. 
\item $\F_q$ is a finite field with $q$ elements and the alphabet used for communication.
\end{itemize}
  
In a network $G$, the sources originate messages and send them through the network via the edges of $G$, which represent error-free point-to-point communication channels. The communication requirements of a network $G$ are nonempty subsets $D_R\subseteq \mathcal{S}$ for $R \in \mathcal{R}$ that represent collections of sources from which $R$ must receive a message.

\begin{definition}
A multicast network is a directed acyclic graph $G= (\mathcal{V},\mathcal{E},\mathcal{S},\mathcal{R},\F_q)$ with communication requirements $D_R=\mathcal{S}$ for every receiver $R\in \mathcal{R}$, meaning that  every receiver demands the message sent by every source.
\end{definition}  

For a given  network $G$, a collection of requirements $\{D_R\mid R\in \mathcal{R}\}$ is achievable if there exist an alphabet $\F_q$ and a communication technique for which each receiver can
reconstruct the messages sent by the sources it demands. 

Examples of communication techniques are routing and network coding. In routing, vertices of a network forward a choice of their incoming messages. Network coding generalizes routing, and vertices can forward combinations of their incoming messages. Linear network coding is a special case of network coding where messages are vectors of $\F_q^n$ and vertices forward $\F_q$-linear combinations of their incoming messages.

\subsection{Achieving multicast network requirements}

We briefly explain how to achieve the requirements of a multicast network and describe some open problems in this area. For a more detailed explanation of the subject, we refer the interested reader to \cite{ks11c}.

Let $G$ be a multicast network with only one receiver $R$. Then its communication requirement is achieved by routing the messages if and only if $|\mathcal{S}|= \mathrm{mincut}(\mathcal{S},R)$. This is a consequence of the result known as the edge-connectivity version of Menger’s
Theorem \cite{co09}, which states that $\mathrm{mincut}(\mathcal{S},R)$ is equal to the maximum number of edge-disjoint paths between the source set $\mathcal{S}$ and the receiver $R$.

If the multicast network has multiple receivers, then a necessary condition for the communication requirements to be achieved is that 
\[|\mathcal{S}|= \min_{R\in \mathcal{R}}\mathrm{mincut}(\mathcal{S},R).\]    
This constant is called the capacity of a multicast network and corresponds to the maximum number of messages that every receiver might be able to reconstruct from the communication. It it evident that achieving the capacity of a multicast network is equivalent to achieving its communication requirements.  
  
Routing does not generally achieve capacity for multicast networks.  
The network in Figure
\ref{butterflynetwork} with the requirements that both receivers receive the messages from both sources, is the simplest
nontrivial example of a multicast network with multiple sources and
multiple receivers for which routing does not achieve capacity. This network is called the butterfly network and was first introduced in \cite{li03}. In order for the communication to achieve capacity, it is necessary for the vertex connected to both $S_1$ and $S_2$ to combine its incoming messages, meaning that it performs network coding.

\begin{figure}[h]
\centering
\scalebox{.29}{\includegraphics{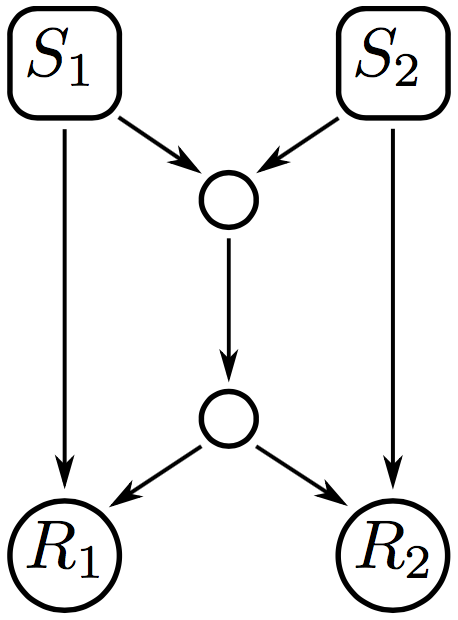}}
\caption{The butterfly network.}
\label{butterflynetwork}
\end{figure}

Li \emph{et al.} in \cite{li03} prove that the capacity of a multicast network is achieved by linear network coding. Let $N$ be the number of receivers of a given multicast network. In \cite{ko03}, K\"otter and M\'edard prove that linear network coding suffices to achieve capacity using vector spaces over any finite field $\F_q$ with size $q>N$. In \cite{ho06}, Ho \emph{et al.} showed that the capacity of a multicast network is achieved with high probability if the linear combinations are taken uniformly at random from a large finite field. Although every finite field of size $q>N$ is enough to achieve the capacity of every multicast network with $N$ receivers, it is believed that this bound is not tight. 

For a given multicast network, let $q_{\min}$ denote the smallest finite field size for which network capacity is achievable using linear network coding. It is important to note that achievability of capacity over $\F_q$ does not automatically imply achievability of capacity over larger fields.  Indeed, Sun \emph{et al.} showed that there exists a family of multicast networks such that $q_{\min}$ is bounded below by a constant times $\sqrt{N}$, but for which capacity is not achieveable over every field $\F_q$ with $q_{\min}<q\leq N$ \cite{su15}. In particular, they give a multicast network where capacity is achievable over $\F_7$ but not over $\F_8$.  We return to this example in Section \ref{s:vector_labeling}.  Therefore, in addition to determining $q_{\min}$, it is an important problem in multicast network communication to understand the set of $q$ with $q_{\min} < q \le N$ for which capacity is achievable using linear network coding over the finite field $\F_q$.

The aim of this work is to concisely provide different  representations of the multicast network problem for researchers to approach this problem from various viewpoints. The main background reference for this paper is the monograph \cite{fr07}.

For the rest of the work, we restrict to multicast networks for which capacity is achievable, meaning that the number of sources is equal to the capacity of the network. This condition is equivalent to the
existence for each receiver $R \in \mathcal{R}$ of a set of edge-disjoint paths $\mathcal{P}_R = \{P_{S,R} \mid S \in \mathcal{S} \}$, where $P_{S,R}$ is a path from $S$ to $R$.

The paper is organized as follows. In Section \ref{s:coding_pts}, we define coding points as the bottlenecks of the network where the linear combinations occur. We provide an integer optimization algorithm that, given a network, returns the subgraph of the network corresponding to a choice of paths between sources and receivers that uses the smallest possible number of coding points. 
Section \ref{s:code_graph} is dedicated to code graphs, which may be thought of as skeletons of multicast networks. More precisely, a code graph is a labeled directed graph whose vertices correspond to sources and coding points, with vertex labels representing the edge-disjoint paths from sources to receivers. We give properties for a labeled, directed acyclic graph to be the code graph of a multicast network. 
In the multicast network problem we label sources with vectors over $\F_q$ and specify how incoming messages combine at coding points. Capacity is achievable if and only if there is a labeling where these combinations satisfy certain linear independence conditions.  In Section \ref{s:vector_labeling}, we describe how these conditions translate to conditions on the vector labelings of the code graph of the network. This leads to the study of matrices over $\F_q$ with prescribed linear dependence and independence conditions, and then to the study of $\F_q$-rational points on certain varieties. We discuss the problem of determining the set of finite fields $\F_q$ for which such an $\F_q$-vector labeling of a code graph exists, and how to understand the collection of all such labelings as the solutions to systems of algebraic equations.

\section{Coding Points and Reduced Multicast Networks}\label{s:coding_pts}

Let $G=(\mathcal{V},\mathcal{E},\mathcal{S},\mathcal{R})$ be the
underlying directed acyclic graph of a multicast network.  In this section, we
study the coding points of $G$, which are the directed edges 
transmitting nontrivial linear combinations of the messages received at their tails. We are interested in choosing the set of
edge-disjoint paths that uses
the smallest possible number of coding points.

\begin{definition}\label{d:coding_point}
  Let $G$ be the underlying directed graph of a multicast network and for each $R\in \mathcal{R}$ let $\mathcal{P}_R = \{P_{S,R} \mid S \in \mathcal{S} \}$ be a set of edge-disjoint paths, where $P_{S,R}$ denotes a path from $S$ to $R$. 
  A \emph{coding point} of $G$ is an edge
  $E=(V,V^\prime)\in \mathcal{E}$ such that:
  \begin{itemize}
  \item there are distinct sources $S$, $S'$ and distinct receivers $R$, $R'$ such that
    $E$ appears in both $P_{S,R} \in \mathcal{P}_R$ and
    $P_{S',R'} \in \mathcal{P}_{R'}$, and 
  \item if $(W,V)\in P_{S,R}$ and $(W',V)\in P_{S',R'}$, then $W\neq W'$.
  \end{itemize}
\end{definition}

\begin{example}\label{coding_points_example}
Consider the  multicast network with four sources and three receivers shown in Figure \ref{f:coding_pt_network}. To make the figure more readable, we omit from the network the directed edges to $R_1$ from $S_1$ and $S_4$, to $R_2$ from $S_1$ and $S_2$, and to $R_3$ from $S_2$, $S_3$, and $S_4$.  For each receiver in this network, there is exactly one set of edge-disjoint paths from the sources to that receiver; these are shown in \Cref{f:coding_pt_network_paths_a}, \Cref{f:coding_pt_network_paths_b}, and \Cref{f:coding_pt_network_paths_c}. Using Definition  \ref{d:coding_point}, we note that the coding points of the network are precisely the edges depicted in Figure \ref{f:coding_pt_network_cp}.
\end{example}
    
\begin{figure}[htb]
\begin{minipage}{.32\textwidth}
\fbox{\begin{subfigure}[t]{\textwidth}
  \centering
 \scalebox{.29}{\includegraphics{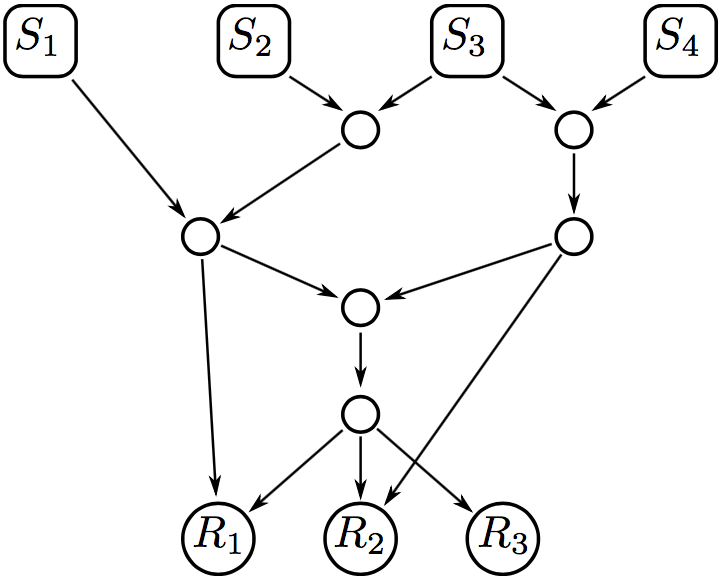}}
  \caption{A network with four sources and three receivers.  For readability, we have omitted edges $(S_1,R_1)$, $(S_4,R_1)$, $(S_1,R_2)$, $(S_2,R_2)$, $(S_2,R_3)$, $(S_3,R_3)$, and $(S_4,R_3)$.} 
  \label{f:coding_pt_network}
  \end{subfigure}}
\end{minipage}
\hfill
\begin{minipage}{.63\textwidth}
\begin{subfigure}[t]{0.48\textwidth}
 \centering
 \scalebox{.29}{\includegraphics{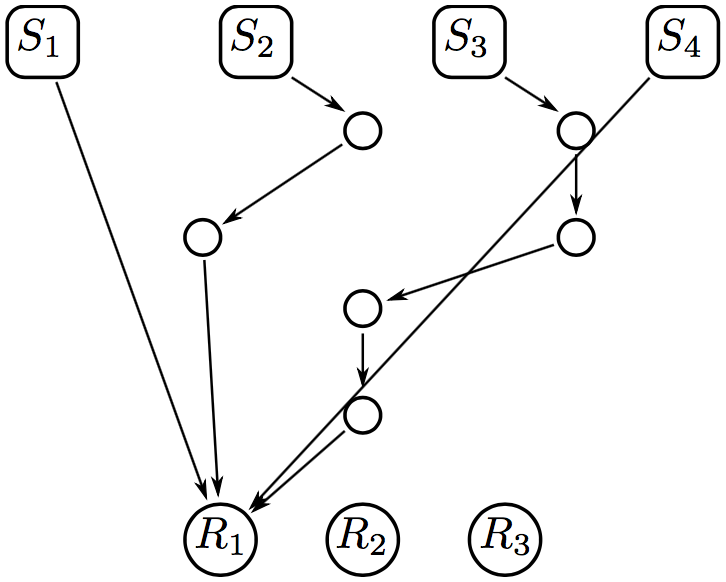}}
  \caption{$\mathcal{P}_{R_1}$.}
  \label{f:coding_pt_network_paths_a}
 \end{subfigure}\hfill
 \begin{subfigure}[t]{0.48\textwidth}
 \centering
 \scalebox{.29}{\includegraphics{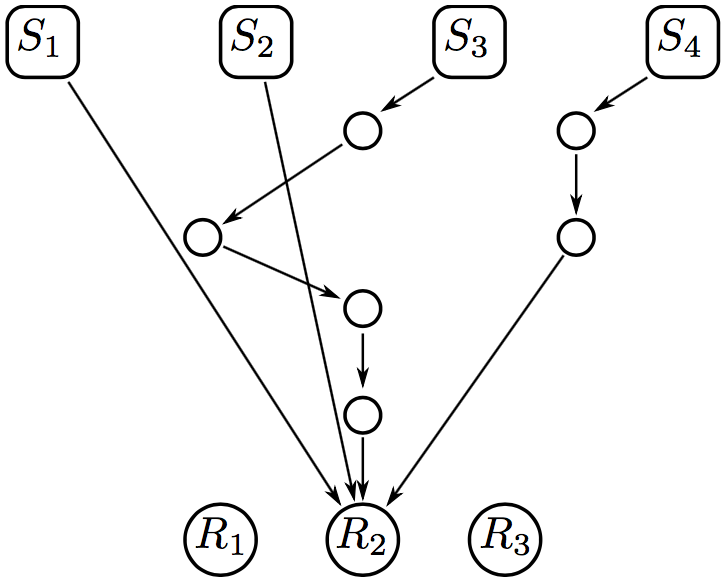}}
  \caption{$\mathcal{P}_{R_2}$.}
  \label{f:coding_pt_network_paths_b}
 \end{subfigure}
 
 \vspace{.2cm}
  \begin{subfigure}[t]{0.48\textwidth}
 \centering
 \scalebox{.29}{\includegraphics{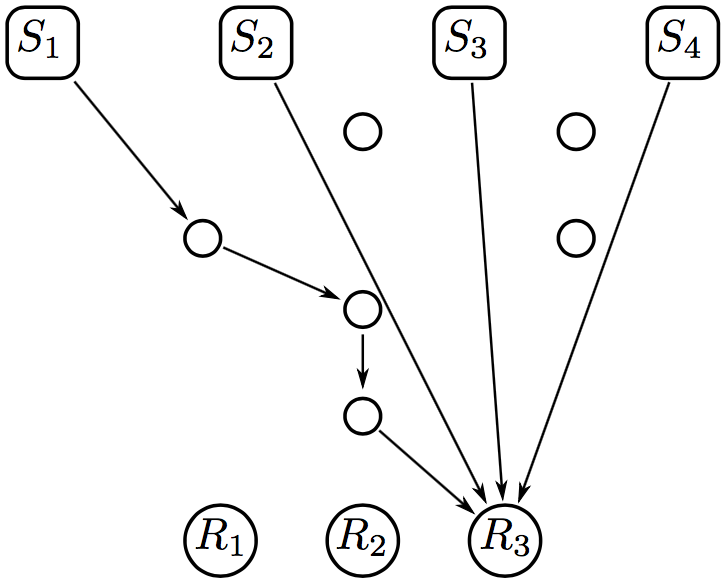}}
  \caption{$\mathcal{P}_{R_2}$.}
  \label{f:coding_pt_network_paths_c}
 \end{subfigure}\hfill
 \begin{subfigure}[t]{0.48\textwidth}
 \scalebox{.29}{\includegraphics{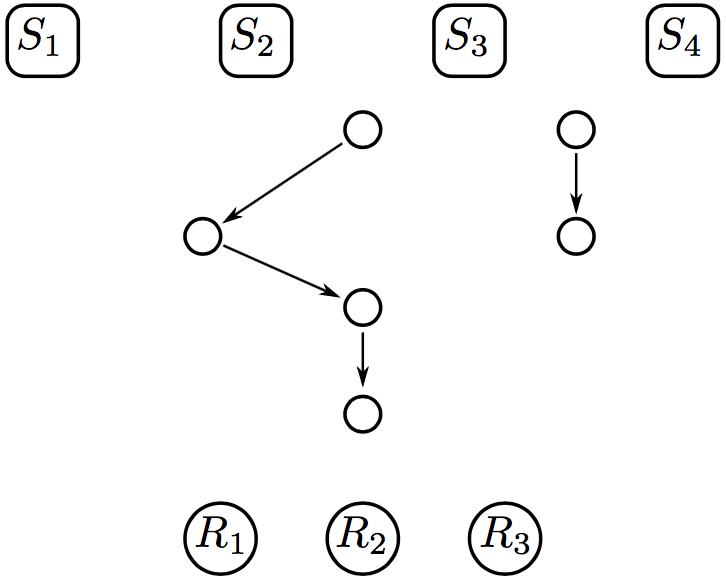}}
  \caption{Coding points.}
  \label{f:coding_pt_network_cp}
 \end{subfigure}
\end{minipage}
  \caption{A multicast network with four sources and three receivers.}
  \label{f:coding_pt_network_paths}
\end{figure}    
    
Network coding has been studied using linear programming. The Ford-Fulkerson algorithm from \cite{fo87} can be used to find the maximum flow of a unicast network, meaning a network with a source and a receiver. It can also be adapted to the case of a multicast network with only one receiver. In Section 3.5 of \cite{fr07}, the authors specify a linear optimization problem to maximize the flow of a network when using network coding.

Our goal is to specify an integer linear optimization problem whose solution corresponds to a set of paths $\mathcal{P}_R$, one for each $R\in \mathcal{R}$, with the minimal number of coding points.

  Without loss of generality, write
  $\mathcal{V}=\left\{V_1,\dots,V_N,\dots,V_t,V_{t+1},\dots,V_{t+L}\right\},$
  where $\mathcal{S}=\{V_1,\dots,V_N\}$ and
  $\mathcal{R}=\{V_{t+1},\dots,V_{t+L}\}$ for $N>0$ and $L>0.$
  Let $\mathrm{Adj}(G)$ be the adjacency matrix of the directed graph
  $G$. For every choice of $V_k\in \mathcal{S}$ and $V_{t+\ell}\in \mathcal{R}$,
  construct the $|\mathcal{V}| \times |\mathcal{V}|$ matrix
  \[X_{k\ell}:=\mathrm{Adj}(G,x_{k\ell})-\mathrm{Adj}(G,x_{k\ell})^T\]
  where  $\mathrm{Adj}(G,x_{k\ell})$ is obtained from
  $\mathrm{Adj}(G)$ by replacing each nonzero entry in position $(i,j)$ by the variable $x_{k\ell
    ij}$. Let ${\bf 1}$ be the vector of ones of size $|\mathcal{V}|$.

\begin{theorem}\label{t:optimization}
Let $G = (\mathcal{V}, \mathcal{E},\mathcal{S},\mathcal{R})$ be the
 underlying graph of a multicast network where $\mathcal{V}=\{V_i \mid i=1,\dots, t+L\}$, $\mathcal{S}=\{V_i\mid i=1,\dots,N\}$ and $\mathcal{R}=\{V_{t+i}\mid i=1,\dots,L\}$. 
Consider the following integer optimization problem:
\begin{eqnarray}
  \label{09.30.16.01} \mathrm{Minimize} && |\mathcal{E}| -
                                           \sum_{\substack{(V_i,V_j)\in\mathcal{E}}} z_{ij}\\
  \mathrm{Subject\  to}\nonumber && \forall \ k=1,\dots, N,\
                                    \ell=1,\dots,L, \  (V_i,V_j)\in
                                    \mathcal{E},\\
\label{09.30.16.02} X_{k\ell}\cdot{\bf 1}&=& 
[\underbrace{0,\dots,0,\overbracket{1}^{k\text{-th}},0,\dots,0}_N,\underbrace{0,\dots,0}_{t-N},
 \underbrace{0,\dots,0,\overbracket{-1}^{\ell\text{-th}},0,\cdots,0}_L]^T\\
\label{09.30.16.05} x_{k\ell ij}, z_{ij}&\in&\left\{0,1\right\}\\
\label{09.30.16.03}\sum_{k'=1}^N x_{k'\ell ij}&\leq& 1,\\
\label{09.30.16.04} z_{ij}&\leq&4-x_{k\ell i'i}-x_{k\ell ij}-x_{k^\prime\ell^\prime i''i}-x_{k^\prime\ell^\prime ij},\\
\nonumber && \qquad \qquad \forall \ \ k^\prime>k,\ \ell^\prime\neq\ell;  
\ (V_{i'},V_i)\neq(V_{i''},V_i)\in \mathcal{E}
\end{eqnarray}
where the $x$'s and the $z$'s are variables. 

A solution $x^*_{k\ell ij}, z^*_{ij}\in \{0,1\}$ of this optimization
problem corresponds to a collection of sets of edge-disjoint paths, one for each $\ell$,
\begin{eqnarray*}
\mathcal{P}_{V_{t+\ell}}=\{P_{V_kV_{t+\ell}} \, | \, k=1,\dots,N\}\label{eq:1}
  \label{e:paths}
\end{eqnarray*}
where
$P_{V_k V_{t+\ell}}=\{(V_i,V_j)\in \mathcal{E} \mid x^*_{k\ell ij}=1\}$ is
a path from $V_k$ to $V_{t+\ell}$. 

The number of coding points of
$G$ used by these collections of paths is 
\[ 
|\mathcal{E}| -\sum_{\substack{(V_i,V_j)\in\mathcal{E}}} z^*_{ij}.
\]
No other choice of paths gives a smaller number of coding points.

\end{theorem}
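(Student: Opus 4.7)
The plan is to decompose the argument into two independent pieces: first, show that the constraints on the $x_{k\ell ij}$ variables specify a collection of edge-disjoint path systems, one per receiver; second, show that the $z_{ij}$ variables together with the objective function exactly count the coding points of that system. Both parts rest on reading the algebra as a standard integer-flow formulation and then matching the structural condition in constraint (\ref{09.30.16.04}) term-by-term against Definition \ref{d:coding_point}.

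For the first piece, I would unpack $X_{k\ell}\cdot{\bf 1}$: its $i$-th coordinate equals $\sum_{j:(V_i,V_j)\in\mathcal{E}} x_{k\ell ij} - \sum_{j:(V_j,V_i)\in\mathcal{E}} x_{k\ell ji}$, i.e., the net flow out of $V_i$ in the $\{0,1\}$-valued flow encoded by the variables $x_{k\ell ij}$. Equation (\ref{09.30.16.02}) then imposes net flow $+1$ at the source $V_k$, $-1$ at the receiver $V_{t+\ell}$, and conservation at every other vertex. Because $G$ is a DAG, this $\{0,1\}$-flow cannot support a nonzero circulation, and hence it must be supported on a single simple path $P_{V_k V_{t+\ell}}$. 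The inequality (\ref{09.30.16.03}) forces, for each fixed receiver index $\ell$, at most one of the paths $\{P_{V_k V_{t+\ell}}\}_{k=1}^N$ to traverse any given edge, which is exactly the edge-disjointness condition defining $\mathcal{P}_{V_{t+\ell}}$.

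For the second piece, I would argue that once the path system is fixed, the right-hand side of (\ref{09.30.16.04}) collapses to $0$ precisely when all four indicators equal $1$, meaning that two paths $P_{V_k V_{t+\ell}}$ and $P_{V_{k'} V_{t+\ell'}}$ share the outgoing edge $(V_i, V_j)$ at $V_i$ while entering $V_i$ on distinct incoming edges. The quantifiers $k' > k$ and $\ell' \ne \ell$ guarantee that these paths originate at distinct sources and terminate at distinct receivers, matching the three clauses of Definition \ref{d:coding_point} exactly; the ordering $k' > k$ rather than $k' \ne k$ simply eliminates a symmetric duplicate. Hence (\ref{09.30.16.04}) forces $z_{ij} = 0$ whenever $(V_i, V_j)$ is a coding point and otherwise leaves $z_{ij}$ free to be $1$. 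Because the objective $|\mathcal{E}| - \sum z_{ij}$ is being minimized and the $z_{ij}$ are binary, an optimum assigns each $z^*_{ij}$ the largest value the constraints permit, so $z^*_{ij} = 1$ iff $(V_i, V_j)$ is not a coding point. The objective value thus equals the number of coding points realized by the path system, and minimizing it yields a path system achieving the minimum possible number of coding points.

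The main obstacle is the bookkeeping verification that (\ref{09.30.16.04}) is neither too strong nor too weak. One must check that no spurious configuration—e.g., two paths from the same source sharing the edge, or two paths to the same receiver (already excluded by (\ref{09.30.16.03}))—forces $z_{ij} = 0$, and conversely that every coding-point configuration from Definition \ref{d:coding_point} activates some instance of (\ref{09.30.16.04}). This reduces to a careful case analysis matching the quantifier clauses $k' > k$, $\ell' \ne \ell$, and $(V_{i'}, V_i) \ne (V_{i''}, V_i)$ bijectively against the three clauses (distinct sources, distinct receivers, distinct incoming edges) of the coding-point definition.
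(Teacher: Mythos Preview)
Your proposal is correct and follows essentially the same approach as the paper: interpret the $X_{k\ell}\cdot\mathbf{1}$ constraints as unit-flow conservation to extract a path $P_{V_kV_{t+\ell}}$, use \eqref{09.30.16.03} for edge-disjointness within each $\mathcal{P}_{V_{t+\ell}}$, and then read \eqref{09.30.16.04} as the indicator that $(V_i,V_j)$ is a coding point so that the objective counts them. Your explicit invocation of acyclicity to rule out a path-plus-circulation decomposition of the $\{0,1\}$-flow is a point the paper leaves implicit, but otherwise the structure and the case analysis you outline match the paper's argument step for step.
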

\begin{proof}
  Observe that variables $x_{k\ell ij}$ run over $V_k\in \mathcal{S}$,
  $V_{t+\ell}\in \mathcal{R}$ and $(i,j)$ where
  $(V_i,V_j)\in \mathcal{E}$. We begin by analyzing the meaning of each
  constraint of the optimization problem. Throughout the proof we say that an edge $(V_i,V_j)$ has value $1$ if the corresponding entry of the solution $x^*_{k\ell ij}$ is equal to $1$, for a given source $V_k$ and receiver $V_{t+\ell}$ and for a solution $x^*_{k\ell ij}\in \{0,1\}$ of the optimization problem.
  
\begin{itemize}
\item Constraint \eqref{09.30.16.02} consists of  $NL$ linear systems of equations, and its solutions correspond to paths between sources and receivers. \\
A given system of $\eqref{09.30.16.02}$ represents the constraints of the communication flow between source $V_k\in \mathcal{S}$ and receiver $V_{t+\ell}\in \mathcal{R}$. The system has $t+L$ equations. Each equation corresponds to the flow of a vertex of the network, meaning the sum of the values of its outgoing edges minus the sum of the values of its incoming edges. These equations divide into three types:
\begin{itemize}
\item The first $N$ equations correspond to the flow starting from the $N$ sources.  The only equation that is equal to $1$ is the $k$\textsuperscript{th} equation and the remaining are zero. This means that, given a solution of the system, the only edges with possibly nonzero value are the ones leaving source $V_k$. Since a solution has only entries in \{0,1\} and the $k$\textsuperscript{th} equation sums those entries to $1$, a solution has entry $1$ corresponding to only one of the edges leaving $V_k$ and $0$ otherwise.
\item The last $L$ equations correspond to the flow getting into the $L$ receivers. The only equation that is equal to $-1$ is the $\ell$\textsuperscript{th} equation and the remaining are zero. This means that, given a solution of the system, the only edges with possibly nonzero value are the ones leaving source $V_{t+\ell}$. Since a solution has only entries in \{0,1\} and the $\ell$\textsuperscript{th} equation subtracts those entries to $-1$, a solution has entry $1$ corresponding to one of the edges leaving $V_{t+\ell}$ and $0$ otherwise. 
T
\item The remaining equations are always equal to zero and correspond to the conservation of flow. This means that,  at each remaining vertex, the number of incoming edges with value $1$ equals the number of outgoing edges with value $1$.   
\end{itemize}
It follows that every solution of the system is a path $P_{V_k V_{t+\ell}}$ and that every path $P_{V_k V_{t+\ell}}$ satisfies the previous conditions.

\item Constraint \eqref{09.30.16.03} says that, given a receiver $V_{t+\ell}$ and an edge $(V_i,V_j)$, as we vary over all of the sources, that is, vary over all values of $k$, at most one of the entries of the solution $x^*_{k\ell ij}$ has value $1$. This inequality implies that the sets $\mathcal{P}_{V_{t+\ell}}$ with $\ell\in \{1,\dots,L\}$ obtained from the solution consists of edge-disjoint paths. 
\end{itemize}

Since $G$ is a multicast network by hypothesis, there exist  sets of edge-disjoint
paths $\mathcal{P}_{V_{t+\ell}}=\{P_{V_kV_{t+\ell}}\mid k=1,\dots,N\}$, one for each $V_{t+\ell}\in \mathcal{R}$. This implies that a solution of the linear optimization system satisfying \eqref{09.30.16.02}, \eqref{09.30.16.05} and
\eqref{09.30.16.03} exists.

Constraint \eqref{09.30.16.04} deals with the coding points. From Definition \ref{d:coding_point}, if $(V_i,V_j)\in \mathcal{E}$ is a coding point for a solution $x^*_{k\ell ij}\in \{0,1\}$,  then there exist 
distinct sources $V_k,V_{k^\prime}\in \mathcal{S},$  distinct receivers
$V_{t+\ell},V_{t+\ell^\prime}\in \mathcal{R}$, and  distinct edges
$(V_{i'},V_i),(V_{i''},V_i)\in \mathcal{E}$ such that $x^*_{k\ell i'j}=x^*_{k\ell ij}=
x^*_{k^\prime\ell^\prime i''j}=x^*_{k^\prime\ell^\prime ij}=1$. As a consequence $z^*_{ij}=0$. Instead, if $(V_i,V_j)\in \mathcal{E}$ is not a coding point for a solution $x^*_{k\ell ij}\in \{0,1\}$ then $z^*_{ij}\leq 1$. 
Condition \eqref{09.30.16.01} combined with constraints
\eqref{09.30.16.05} and \eqref{09.30.16.04}  implies that $z^*_{ij}=0$
if  $(V_i,V_j)\in \mathcal{E}$ is a coding point and $z^*_{ij}=1$ otherwise. Therefore, the number of coding points for a fixed assignment of the variables
$x^*_{k\ell ij}$ is 
\[|\mathcal{E}| -\sum_{\substack{(V_i,V_j)\in\mathcal{E}}} z^*_{ij}.\]

\end{proof}

We illustrate the algorithm with an example.

\begin{example}\label{butterfly_ex}
Consider the modified butterfly network of Figure \ref{modifiedbutterfly}. 
\begin{figure}[htb]
\centering
\begin{subfigure}[t]{0.48\textwidth}
\centering
\scalebox{.27}{\includegraphics{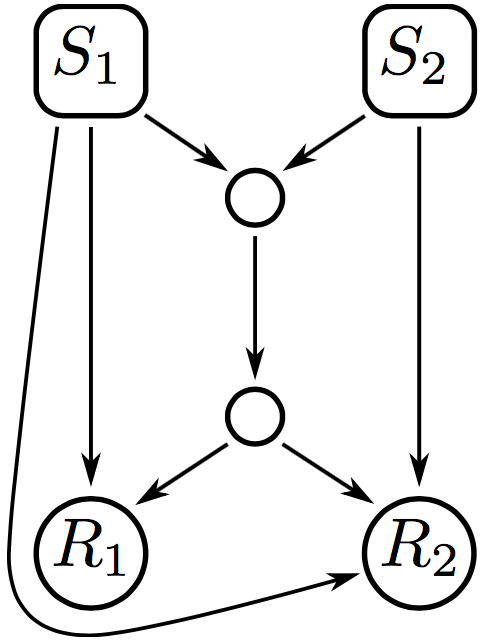}}
\caption{Modified butterfly network.}
\label{modifiedbutterfly}
\end{subfigure}
\begin{subfigure}[t]{0.48\textwidth}
\centering
\scalebox{.27}{\includegraphics{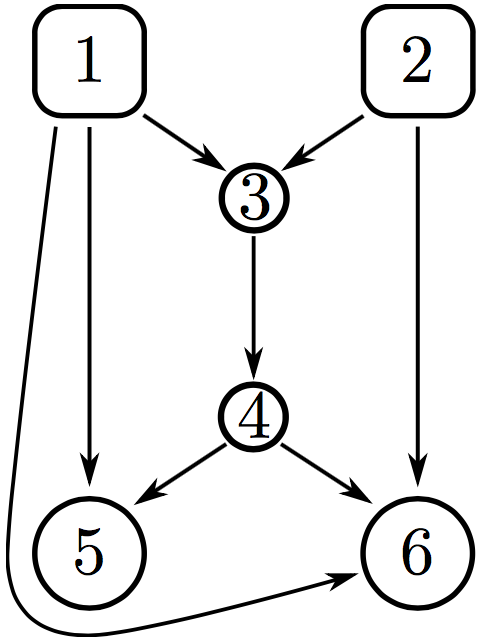}}
\caption{Labels for algorithm.}
\end{subfigure}

\vspace{.2cm}
\begin{subfigure}[t]{0.24\textwidth}
        \centering
        \scalebox{.29}{\includegraphics{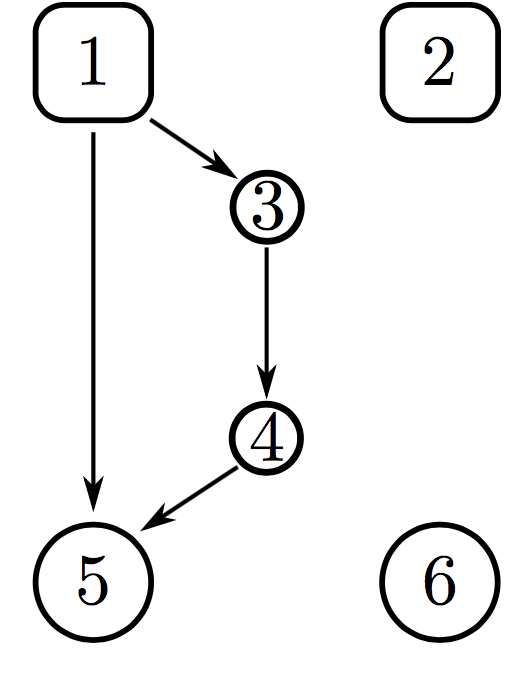}}
        \caption{Paths $P_{S_1,R_1}$}
        \label{modifiedbutterfly_a}
\end{subfigure}
\begin{subfigure}[t]{0.24\textwidth}
        \centering
        \scalebox{.29}{\includegraphics{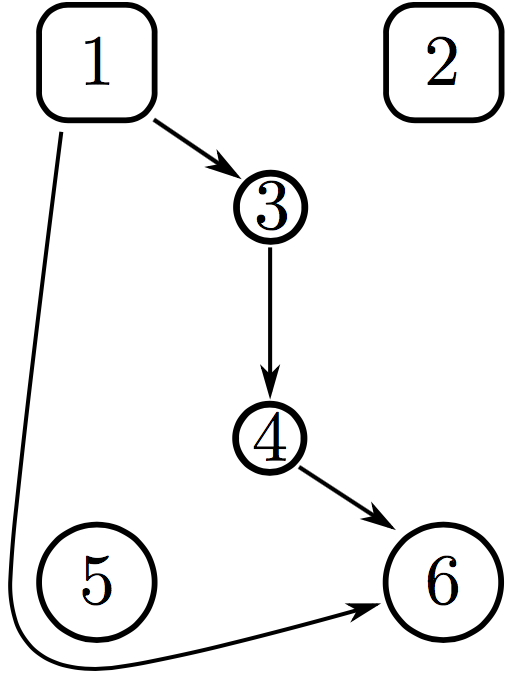}}
        \caption{Paths $P_{S_1,R_2}$}
        \label{modifiedbutterfly_b}
\end{subfigure}
\begin{subfigure}[t]{0.24\textwidth}
        \centering
        \scalebox{.29}{\includegraphics{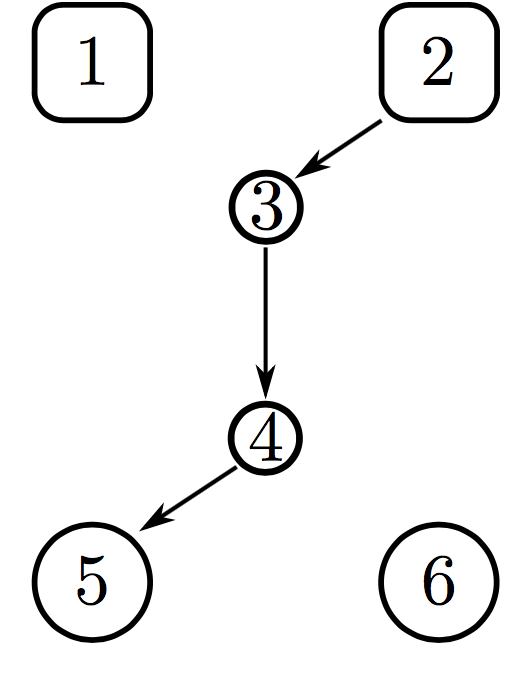}}
        \caption{Path $P_{S_2,R_1}$}
        \label{modifiedbutterfly_c}
\end{subfigure}
\begin{subfigure}[t]{0.24\textwidth}
        \centering
        \scalebox{.29}{\includegraphics{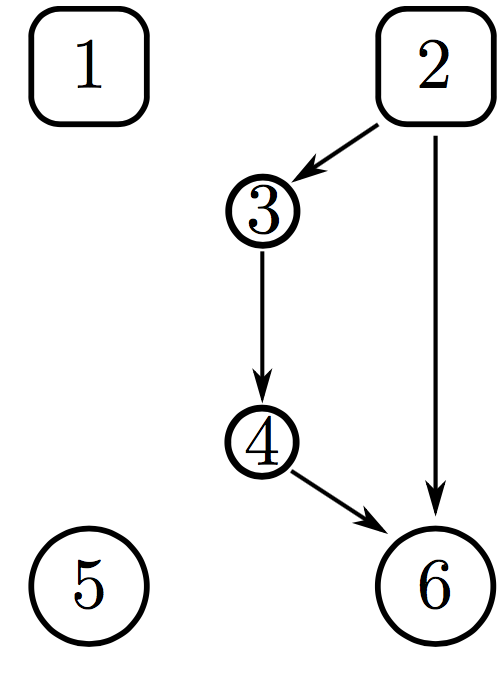}}
        \caption{Paths $P_{S_2,R_2}$}
        \label{modifiedbutterfly_d}
\end{subfigure}
\caption{Modified Butterfly Network}
\end{figure}
In this case, we have $N=2$ sources and $L=2$ receivers. As explained in the discussion leading up to the statement of Theorem \ref{t:optimization}, for each  $k,l\in\{1,2\}$ we define the matrix
\[X_{kl}=
\left(\begin{array}{cccccc}
0 & 0 & x_{kl13} & 0 & x_{kl15} & x_{kl16} \\
0 & 0 & x_{kl23} & 0 & 0 & x_{kl26} \\
-x_{kl13} & -x_{kl23} & 0 & x_{kl34} & 0 & 0 \\
0 & 0 & -x_{kl34} & 0 & x_{kl45} & x_{kl46} \\
-x_{kl15} & 0 & 0 & -x_{kl45} & 0 & 0 \\
-x_{kl16} & -x_{kl26} & 0 & -x_{kl46} & 0 & 0\end{array}\right).\]
Constraint \eqref{09.30.16.02} gives the following four linear systems of
equations:
\begin{eqnarray}
\label{11.30.1}X_{11}\cdot{\bf 1}&=&\left[1, 0, 0, 0, -1, 0\right]\\
\label{11.30.2}X_{12}\cdot{\bf 1}&=&\left[1, 0, 0, 0,0, -1\right]\\
\label{11.30.3}X_{21}\cdot{\bf 1}&=&\left[0, 1, 0, 0, -1, 0\right]\\
\label{11.30.4}X_{22}\cdot{\bf 1}&=&\left[0, 1, 0, 0, 0, -1\right].
\end{eqnarray}

Constraint \eqref{09.30.16.05} imposes the solutions of the system to be in $\{0,1\}$. It holds that:
\begin{itemize}
\item System~{\rm (\ref{11.30.1})} has two solutions (see Figure \ref{modifiedbutterfly_a}):
\begin{align*}
\mathbf{S}_{11}(1)&:=\left\{x_{1113}=x_{1134}=x_{1145}=1 \text{ and the rest } 0\right\} \text{ and} \\
\mathbf{S}_{11}(2)&:=\left\{x_{1115}=1 \text{ and the rest } 0\right\}.
\end{align*}
\item System~{\rm (\ref{11.30.2})} has two solutions (see Figure \ref{modifiedbutterfly_b}):
\begin{align*}
\mathbf{S}_{12}(1)&:=\left\{x_{1213}=x_{1234}=x_{1246}=1 \text{ and the rest } 0\right\} \text{ and }\\
\mathbf{S}_{12}(2)&:=\left\{x_{1216}=1 \text{ and the rest } 0 \right\}.
\end{align*}
\item System~{\rm (\ref{11.30.3})} has one solution (see Figure \ref{modifiedbutterfly_c}):
\begin{align*}
\mathbf{S}_{21}(1) &:=\left\{x_{2123}=x_{2134}=x_{2145}=1 \text{ and the rest } 0\right\}.
\end{align*}
\item System~{\rm (\ref{11.30.4})} has two solutions (see Figure \ref{modifiedbutterfly_d}):
\begin{align*}
\mathbf{S}_{22}(1)&:=\left\{x_{2223}=x_{2234}=x_{2246}=1 \text{ and the rest } 0\right\} \text{ and } \\
\mathbf{S}_{22}(2)&:=\left\{x_{2226}=1 \text{ and the rest } 0\right\}.
\end{align*}
\end{itemize}

Combining with constraint~\eqref{09.30.16.03}, we obtain three possible solutions:
\begin{eqnarray*}
\mathbf{S}_1&=&\mathbf{S}_{11}(2)\cup \mathbf{S}_{12}(1) \cup \mathbf{S}_{21}(1) \cup \mathbf{S}_{22}(2)\\
\mathbf{S}_2&=&\mathbf{S}_{11}(2)\cup \mathbf{S}_{12}(2) \cup \mathbf{S}_{21}(1) \cup \mathbf{S}_{22}(1)\\
\mathbf{S}_3&=&\mathbf{S}_{11}(2)\cup \mathbf{S}_{12}(2) \cup \mathbf{S}_{21}(1) \cup \mathbf{S}_{22}(2).
\end{eqnarray*}
Finally, for every edge $(i,j)$ of the network, we define the variable $z_{ij}.$ The expression in
\eqref{09.30.16.01}  tells us which of one the solutions $\mathbf{S}_1, \mathbf{S}_2$ or $\mathbf{S}_3$ has the minimum number of coding points.

If we take $\mathbf{S}_1,$ then constraint $\eqref{09.30.16.04}$ says that $z_{34}=0$
since $x_{1213}=x_{1234}=x_{2134}=x_{2123}=1.$ There is no restriction on the remaining $z_{ij}$'s for $S_1$, meaning that each one can be either $0$ or $1$. Then the minimum value of $|\mathcal{E}| - \sum_{\substack{(V_i,V_j)\in\mathcal{E}}} z_{ij}$
is $8-7=1$. This implies that $\mathbf{S}_1$ is a choice of paths with one coding point, $(3,4)\in \mathcal{E}$.

If we take either $\mathbf{S}_2$ or $\mathbf{S}_3,$ then constraint $\eqref{09.30.16.04}$ implies that there are no restrictions on the $z_{ij}$'s. Thus, the minimum value of
$|\mathcal{E}| - \sum_{\substack{(V_i,V_j)\in\mathcal{E}}} z_{ij}$
is $8-8=0.$ This implies that $\mathbf{S}_2$ and $\mathbf{S}_3$ are two choices of paths without coding points.  The algorithm in Theorem \ref{t:optimization} returns either $\mathbf{S}_2$ or $\mathbf{S}_3$.

\end{example}

 
\begin{definition}\label{reduced-network_definition} Let $G =
  (\mathcal{V},\mathcal{E},\mathcal{S},\mathcal{R},\{\mathcal{P}_R \,
  | \, R \in \mathcal{R}\})$ be a multicast network.  We say $G$ is
  \emph{reduced} if every edge and vertex of $G$ appears in some path $P_{S,R}$.

\end{definition}

Note that the multicast network of Figure \ref{f:coding_pt_network}, where $\mathcal{P}_{R_1}$, $\mathcal{P}_{R_2}$ and $\mathcal{P}_{R_3}$ are as shown in Figures \ref{f:coding_pt_network_paths_a}, \ref{f:coding_pt_network_paths_b}, and \ref{f:coding_pt_network_paths_c}, respectively, is reduced since every edge and vertex appears in some path. However, the modified butterfly network shown in Figure \ref{modifiedbutterfly} is not reduced for any choice of paths. For any of the collections of paths $\mathbf{S}_1, \mathbf{S}_2$, or $\mathbf{S}_3$ from Example \ref{butterfly_ex}, there is at least one edge of the modified butterfly network that does not appear. Therefore, the modified butterfly network is not reduced.

Every multicast network $G$ with a choice of a set of paths $\{\mathcal{P}_R \, | \, R\in \mathcal{R}\}$, contains a reduced multicast network $G'$ as a subgraph. It is obtained from $G$ by  omitting the unused edges and vertices. The multicast network $G$ and its reduced form $G'$ have the same properties with respect to linear network coding.  

For the rest of this paper, we assume that
\[G = (\mathcal{V},\mathcal{E},\mathcal{S},\mathcal{R},\{\mathcal{P}_R
\, | \, R \in \mathcal{R}\})\]
is a reduced multicast network and 
for which the set $\{\mathcal{P}_R \mid R \in \mathcal{R}\}$
corresponds to a choice of sets of edge-disjoint paths with a minimal number of coding points.

\section{Code Graphs}\label{s:code_graph}

In this section we introduce the code graph of a multicast network, which is a directed graph with labeled vertices that preserves the information essential to understanding the properties of the network with respect to linear coding.  We want the edges in this new graph to correspond to paths in the original network along which the message being passed is constant.  A maximal such path must originate either at a source or at the tail of a coding point.  This motivates the next definition.

\begin{definition}\label{coding-direct-path_definition} Let $G = (\mathcal{V}, \mathcal{E}, \mathcal{S},\mathcal{R},\{\mathcal{P}_R \mid R \in \mathcal{R}\})$ be a multicast network.    A \emph{coding-direct path} in $G$ from $V_1\in \mathcal{V}$ to $V_2 \in \mathcal{V}$ is a path in $G$ from $V_1$ to $V_2$ that does not pass through any coding point in $G$, except possibly a coding point with tail $V_1$.
\end{definition}

Given a multicast network, we now construct our desired vertex-labeled directed graph that preserves the essential coding properties of the network.  

\begin{lemma}\label{code-graph_lemma}
  Let $G = (\mathcal{V},\mathcal{E},\mathcal{S},\mathcal{R},\{\mathcal{P}_R \mid R \in \mathcal{R}\})$ be a multicast network and let $\mathcal{Q}$ be its set of
  coding points. Let $\Gamma = \Gamma(G)$ be the vertex-labeled
  directed graph constructed as follows:
\begin{itemize}
\item The vertex set of $\Gamma$ is $\mathcal{S} \cup \mathcal{Q}$, i.e., there
  is one vertex in $\Gamma$ for each source and for each coding
  point of $G$. Given a vertex $V$ of $\Gamma$ we call the corresponding
  source or coding point of $G$ the \emph{$G$-object} of $V$.
\item The edge set of $\Gamma$ is the set of all ordered pairs
  of vertices of $\Gamma$ such that there is a coding-direct
  path in $G$ between the corresponding $G$-objects.  
\item Each vertex $V$ of $\Gamma$ is labeled with a subset $L_V$ of the set $\mathcal{R}$.  A receiver $R \in \mathcal{R}$ is contained in $L_V$ if and only if there is a coding-direct path in $G$ from the $G$-object of $V$ to $R$.

\end{itemize}
The following properties hold:\begin{enumerate}
\item $\Gamma$ is an acyclic graph.
\item The in-degree of every vertex in $\Gamma$ is either $0$ or at least $2$.  Moreover, the $G$-object of a vertex $V$ of $\Gamma$ is a
  source in $G$ if and only if the in-degree of $V$ is $0$.
\item For each $R \in \mathcal{R}$ it holds that
\begin{itemize}
\item the cardinality of the set $\mathbf{V}_R = \{V\mid R\in L_V\}$ of vertices
in $\Gamma$ for which $R \in L_V$ is $|\mathcal{S}|$, and 
\item the set $\Pi_R = \{\pi_{S,R} \mid S \in \mathcal{S}\}$, where $\pi_{S,R}$ is a path in $\Gamma$ from $S$ to $V_{S,R}$ corresponding to the path $P_{S,R}$ in $G$ from $S$ to $R$, consists of $|\mathcal{S}|$ vertex-disjoint paths. 
\end{itemize}
\end{enumerate}
\end{lemma}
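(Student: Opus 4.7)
The plan is to verify the three properties of $\Gamma$ in order, exploiting the decomposition of each $P_{S,R}$ into maximal coding-direct segments separated by the coding points it traverses. A directed cycle in $\Gamma$ would concatenate, along the coding-direct paths witnessing its edges, to a closed directed walk in $G$ traversing at least one edge, contradicting the acyclicity of $G$; this settles the first property. For the in-degree of a source $S$: since $S$ has in-degree $0$ in $G$, no directed path in $G$ can terminate at $S$, and so the vertex $S$ of $\Gamma$ has in-degree $0$.

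To show that each coding point $E=(V,V')$ has in-degree at least $2$, I would invoke Definition~\ref{d:coding_point} to obtain distinct sources $S,S'$, distinct receivers $R,R'$ with $E\in P_{S,R}\cap P_{S',R'}$, and predecessor edges $(W,V)\in P_{S,R}$ and $(W',V)\in P_{S',R'}$ satisfying $W\neq W'$. Walking each of these paths backwards from $V$ until first reaching either the source or another coding point produces two coding-direct sub-paths ending at $V$, and therefore two incoming arcs $\Omega\to E$ and $\Omega'\to E$ in $\Gamma$. The task is to show $\Omega\neq\Omega'$. If the two origins have different types, or if both are sources, the conclusion is immediate, since in the latter case $\Omega=S$ and $\Omega'=S'$ are distinct. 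The delicate case is when both tracebacks terminate at a common upstream coding point $X$; then $P_{S,R}$ and $P_{S',R'}$ share a common prefix through $X$ and afterwards split into two edge-disjoint coding-direct segments that reconverge at $V$. In this case I plan to re-route $P_{S',R'}$ to follow the segment used by $P_{S,R}$ after $X$---a legal swap because $P_{S,R}$ and $P_{S',R'}$ belong to the distinct collections $\mathcal{P}_{R}$ and $\mathcal{P}_{R'}$---which forces the predecessor edge at $V$ in the new $P_{S',R'}$ to equal $(W,V)$, so $E$ loses coding-point status, contradicting the minimality of the number of coding points.

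Finally, for the label and vertex-disjoint path statements, decompose each $P_{S,R}$ at its coding points into an ordered list $E_1^{S,R},\ldots,E_{m_S}^{S,R}$; consecutive coding points are linked by coding-direct segments of $G$ and thus by edges of $\Gamma$, the prefix from $S$ to the tail of $E_1^{S,R}$ yields the edge $S\to E_1^{S,R}$, and the terminal coding-direct segment reaches $R$, witnessing $R\in L_{V_{S,R}}$, where $V_{S,R}$ denotes the last vertex of $\pi_{S,R}=(S,E_1^{S,R},\ldots,E_{m_S}^{S,R})$ (with $V_{S,R}=S$ when $m_S=0$). For $S\neq S'$, the edge-disjointness of $P_{S,R}$ and $P_{S',R}$ within $\mathcal{P}_R$ implies they share no coding point, so $\pi_{S,R}$ and $\pi_{S',R}$ share no vertex; this yields $|\mathcal{S}|$ pairwise vertex-disjoint paths in $\Pi_R$ and the lower bound $|\mathbf{V}_R|\geq|\mathcal{S}|$. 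For the matching upper bound, any $V\in\mathbf{V}_R$ is witnessed by a coding-direct path $Q$ in $G$ from its $G$-object to $R$; because $G$ is reduced, the last edge of $Q$ lies on some $P_{S,R}$, and tracing that path backwards from $R$ identifies $V$ with $V_{S,R}$. The most delicate step I anticipate is the rerouting argument of the previous paragraph, where I must verify in detail that swapping a coding-direct segment of $P_{S',R'}$ preserves edge-disjointness within $\mathcal{P}_{R'}$ and genuinely eliminates a coding point.
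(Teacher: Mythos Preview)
Your plan mirrors the paper's proof closely: concatenate coding-direct paths to rule out cycles, use in-degree $0$ of sources in $G$ for the source half of (2), decompose each $P_{S,R}$ at its coding points to build $\pi_{S,R}$, and deduce vertex-disjointness of $\Pi_R$ from edge-disjointness of $\mathcal{P}_R$. On these points there is nothing to add.

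Where you go beyond the paper is in the in-degree $\ge 2$ argument. The paper traces back along $P_{S,R}$ and along $P_{S',R'}$, obtains one incoming arc at $E$ from each, and stops---it never checks that the two tracebacks land at \emph{distinct} vertices of $\Gamma$. You correctly isolate the case where both land at a common upstream coding point $X$; since $\Gamma$ is a simple digraph, two distinct coding-direct paths from $X$ to $E$ contribute only the single arc $X\to E$. Your rerouting-plus-minimality repair is the natural idea and is available under the paper's standing hypothesis that the path system minimizes the number of coding points. The verification you flag is real: the swapped segment $\sigma\subset P_{S,R}$ might meet some $P_{S'',R'}$, and if $S''=S$ the resulting shared edge need not be a coding point, so the naive swap can violate edge-disjointness in $\mathcal{P}_{R'}$. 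One way through is to argue by tracing $P_{S'',R'}$ back along $\sigma$: either it reaches the edge $X$, contradicting edge-disjointness of $P_{S',R'}$ and $P_{S'',R'}$, or it diverges earlier with a distinct predecessor, and then a further case analysis (or an alternate swap) is needed. This is the step that genuinely requires work; the paper simply does not engage with it.

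One further point on part (3): your upper bound $|\mathbf{V}_R|\le|\mathcal{S}|$ via ``the last edge of $Q$ lies on some $P_{S,R}$'' does not yet close. Sharing the last edge into $R$ does not force the coding-direct witness $Q$ to coincide with the terminal coding-direct segment of $P_{S,R}$ all the way back to $V_{S,R}$; the two paths can diverge earlier, so you have not shown $V=V_{S,R}$. The paper's proof is no better here---it asserts $|\mathbf{V}_R|=|\mathcal{S}|$ as a consequence of vertex-disjointness without separately bounding $|\mathbf{V}_R|$ from above---so this is a place where both arguments need tightening, likely again via reducedness and minimality.
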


\begin{proof} {\ } \begin{enumerate}
\item Suppose the vertices $V_1,\dots, V_n$ form a cycle in $\Gamma$.  Then the coding-direct paths in $G$ joining the
  corresponding $G$-objects form a cycle in $G$, which is impossible
  since $G$ is acyclic.
\item Let $V$ be a vertex of $\Gamma$.  The $G$-object
  of $V$ in $G$ is either a source or a coding point of
  $G$.  If it is a source $S \in \mathcal{S}$, then the in-degree of
  $S$ in $G$ is 0.  Thus $S$ cannot be the end of a coding-direct
  path in $G$, and so $V$ cannot be the head of an edge in
  $\Gamma$.  Thus the in-degree of $V$ is 0.  
  
  On the other hand, if the
  $G$-object of $V$ is a coding point $Q\in \mathcal{Q}$, then there are distinct sources $S, S' \in \mathcal{S}$ and
  distinct receivers $R, R' \in \mathcal{R}$ such that $Q$ appears in both paths
  $P_{SR}$ and $P_{S'R'}$ in $G$. If $Q$ is the first coding point appearing
  in the path $P_{SR}$, then there is a coding-direct path in $G$ from
  $S$ to the tail of $Q$. Hence there is an edge in $\Gamma$ from $S$ to $V$.
  Otherwise, there is some other coding point along the path $P_{S,R}$ such
  that there is a coding-direct path in $G$ from that coding point to $V$, and so there is
  an edge in $\Gamma$ from that coding point to $V$.  In either case, there is an incoming edge to $V$ from the path $P_{S,R}$, and the same holds for the path $P_{S',R}$.  Thus the in-degree of $V$ is at least 2, as desired. 
\item Fix $R \in \mathcal{R}$ and let $S \in \mathcal{S}$.  If $P_{S,R}$ is a coding-direct path, then $R \in L_S$. 
Otherwise, let $Q_{S,1}$,
\dots, $Q_{S,t_S}$ be the coding points in $P_{S,R}$ in order of appearance.  This implies $R\in L_{Q_{S,t_S}}$. The set $\mathbf{V}_R$ is the set of vertices of $\Gamma$ whose $G$-objects have a coding-direct path to $R$.

If $P_{S,R}$ is a coding-direct path, then $\pi_{S,R}$ is the empty path consisting of the single vertex $S$ in $\Gamma$. Otherwise, there is a path  $\pi_{S,R}$ in $\Gamma$ from $S$ to $Q_{S,t_S}$ since in $G$ there is a coding-direct path from  the $G$-object
of $S$ to the $G$-object of $Q_{S,1}$ and
from the $G$-object of $Q_{S,\ell}$ to the one of $Q_{S,\ell+1}$ for $1 \leq \ell \leq t_S-1$. Let $\Pi_R$ be the collection of these paths.\\
  Since the paths in $\mathcal{P}_R$ are
edge-disjoint, no coding point can appear in more than one  path $P_{S,R}$.  Thus the corresponding paths in $\Pi_R$ are vertex-disjoint. As a consequence $|\mathbf{V}_R|=|\Pi_R|=|S|$. 
\end{enumerate}
\end{proof}

\begin{definition} Let $G = (\mathcal{V},\mathcal{E},\mathcal{S},\mathcal{R},\{\mathcal{P}_R \mid R \in \mathcal{R}\})$ be a multicast network.  The graph $\Gamma = \Gamma(G)$ constructed in Lemma~\ref{code-graph_lemma} is called the \emph{code graph} of $G$.  To refer to a code graph and its associated data succinctly, we often say that $\Gamma = (\mathcal{S},\mathcal{Q},\mathcal{R},\{\mathbf{V}_R \mid R \in \mathcal{R}\},\{\Pi_R \mid R \in \mathcal{R}\})$ is the code graph of $G$. 
\end{definition}

Figure \ref{5a} shows the butterfly network; its code graph is shown in Figure \ref{5b}. Note that the code graph has three vertices since the butterfly network has two sources $S_1$ and $S_2$, and one coding point, which we call $Q$. There is an edge between the vertex corresponding to $S_1$ and the vertex corresponding to $Q$ since there is a coding-direct path in the butterfly network between $S_1$ and $Q$. There is a similar edge coming from the vertex corresponding to $S_2$. Finally, the vertex corresponding to $S_1$ is labeled with the receiver $R_1$ since there is a coding-direct path in the butterfly network from $S_1$ to $R_1$. The vertices corresponding to $S_2$ and to $Q$ are labeled similarly. 

Figure \ref{5c} shows the multicast network of Figure \ref{f:coding_pt_network} (recall that there are edges of the network not shown in this figure), where $\mathcal{P}_{R_1}$, $\mathcal{P}_{R_2}$ and $\mathcal{P}_{R_3}$ are as shown in Figures \ref{f:coding_pt_network_paths_a}, \ref{f:coding_pt_network_paths_b}, and \ref{f:coding_pt_network_paths_c}. Its code graph is shown in Figure \ref{5d}.

\begin{figure}[h]
\centering
\begin{subfigure}{.45\linewidth}
\centering
\scalebox{.29}{\includegraphics{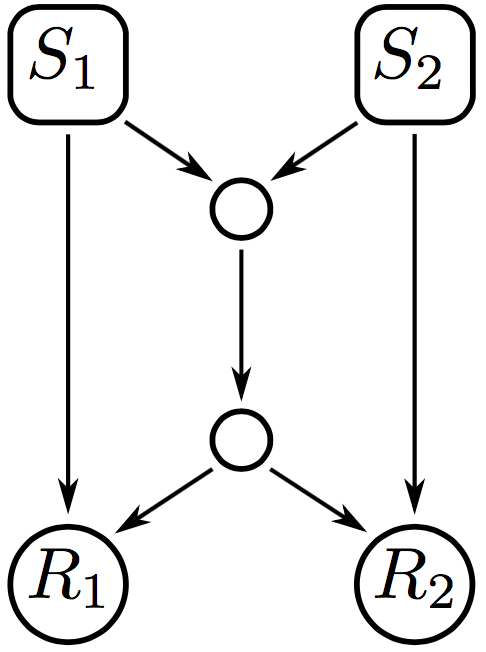}}
\subcaption{Butterfly network}
\label{5a}
\end{subfigure}
\begin{subfigure}{.45\linewidth}
\centering
\scalebox{.29}{\includegraphics{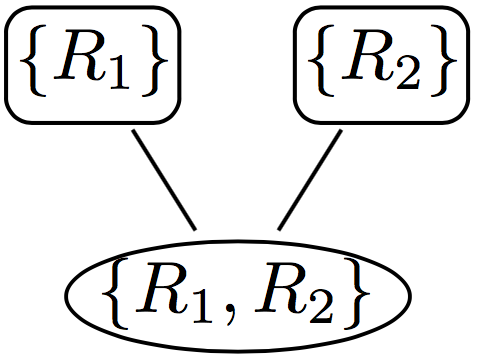}}
\subcaption{Code graph}
\label{5b}
\end{subfigure}

\vspace{.2cm}
\begin{subfigure}{.45\linewidth}
\centering
\scalebox{.29}{\includegraphics{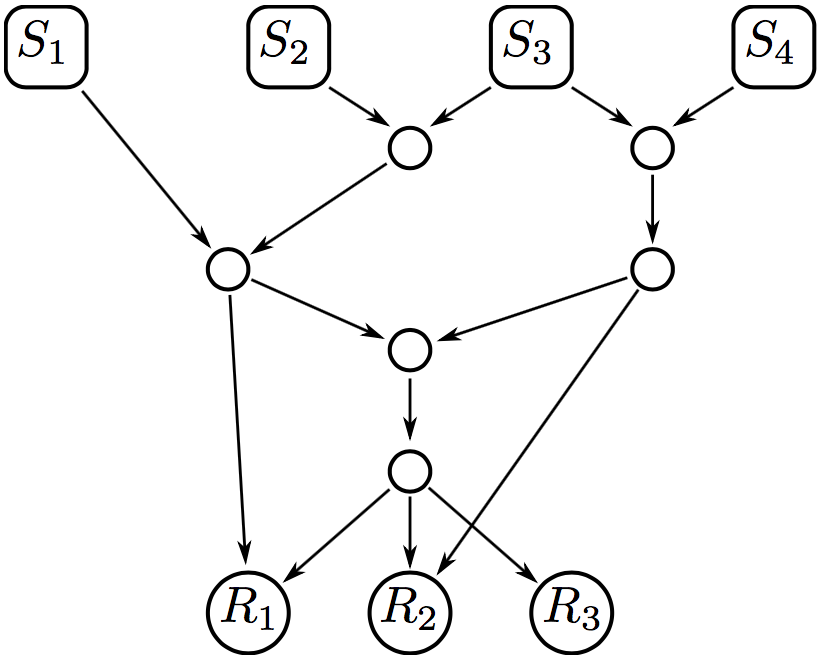}}
\subcaption{Multicast network}
\label{5c}
\end{subfigure}
\begin{subfigure}{.45\linewidth}
\centering
\scalebox{.29}{\includegraphics{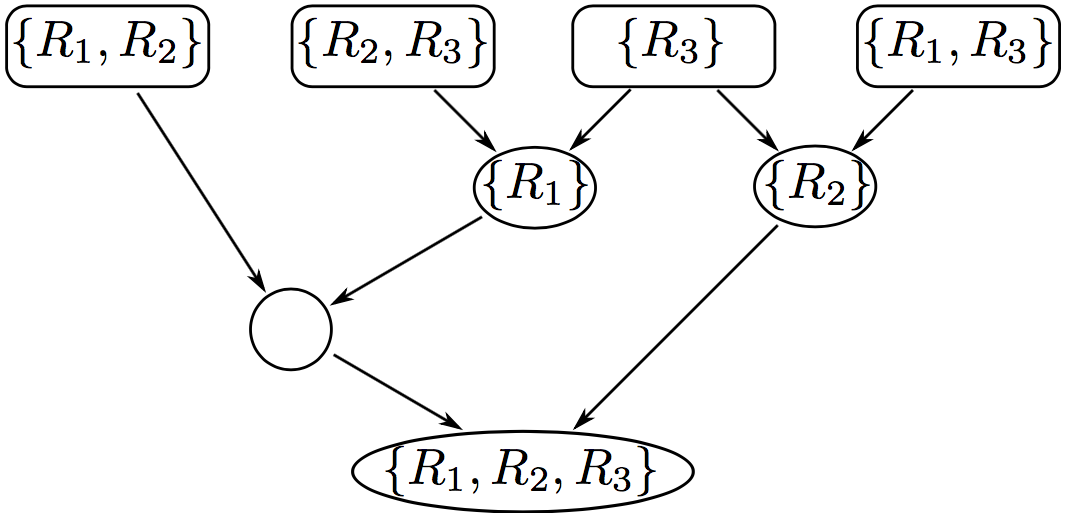}}
\subcaption{Code graph}
\label{5d}
\end{subfigure}
\caption{Networks with their code graph}
\label{butterflynetworkwithcodegraph}
\end{figure}


The following proposition characterizes the directed acyclic graphs that occur as the code graph of a multicast network.

\begin{proposition}\label{code-graph_is_network_proposition}
Let $\Gamma = (\mathbf{V}, \mathbf{E})$ be a vertex-labeled, directed acyclic graph where each vertex $V$ is labeled with a finite set $L_V$. Let $\mathcal{S}: = \{V \in \mathbf{V} \mid  V \text{ has in-degree } 0 \}$, $\mathcal{Q} := \mathbf{V} \backslash \mathcal{S}$, and  $\mathcal{R}$ the union of the sets $L_V$.
Suppose
\begin{itemize}
\item $\mathcal{S}$ and $\mathcal{R}$ are non-empty; 
\item the in-degree of every vertex in $\mathcal{Q}$ is at least 2;

\item each $R \in \mathcal{R}$ appears as a label of exactly $|\mathcal{S}|$ distinct vertices; and
\item for each $R \in \mathcal{R}$, there is a set $\Pi_R = \{\pi_{S,R} \, | \, S \in \mathcal{S}\}$ of vertex-disjoint paths such that $\pi_{S,R}$ starts at $S$ and ends at a vertex labeled with $R$, and every vertex and every edge of $\Gamma$ is contained $\pi_{S,R}$ for at least one pair $(S,R)$.

\end{itemize}

\noindent Then $\Gamma$ is the code graph for a reduced multicast network whose sources, coding points, and receivers are in one-to-one correspondence with the elements of $\mathcal{S}$, $\mathcal{Q}$, and $\mathcal{R}$, respectively.
\end{proposition}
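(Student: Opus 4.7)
The plan is to construct a reduced multicast network $G$ from $\Gamma$, equipped with a natural choice of edge-disjoint paths, and then verify that its code graph $\Gamma(G)$ is $\Gamma$ itself. For the construction, for each $Q\in\mathcal{Q}$ introduce two new vertices $Q^{-},Q^{+}$ in $G$ joined by a directed edge $E_Q=(Q^{-},Q^{+})$; the edge $E_Q$ is designed to be the coding point attached to $Q$. Take $\mathcal{S}$ as the sources of $G$ and add a receiver vertex for each $R\in\mathcal{R}$. Define $\tau(V):=V$ if $V\in\mathcal{S}$ and $\tau(V):=Q^{+}$ if $V=Q\in\mathcal{Q}$. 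For every $\Gamma$-edge $(V_1,V_2)\in\mathbf{E}$---where necessarily $V_2\in\mathcal{Q}$ since sources have in-degree $0$---add the directed edge $(\tau(V_1),V_2^{-})$ to $G$, and for every vertex $V$ with $R\in L_V$ add the edge $(\tau(V),R)$. For each pair $(S,R)$, let $P_{S,R}$ be the $G$-path that traces $\pi_{S,R}$: each $\Gamma$-edge is replaced by its $G$-counterpart, each visit of an interior $\Gamma$-vertex $Q$ passes through $E_Q$, and the path terminates with the edge into the receiver $R$.

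Routine verifications establish the structural properties: $G$ is acyclic (inherited from $\Gamma$), each $\mathcal{P}_R=\{P_{S,R}\mid S\in\mathcal{S}\}$ is a family of $|\mathcal{S}|$ edge-disjoint paths (this follows directly from vertex-disjointness of $\Pi_R$ in $\Gamma$ together with the injective correspondence between $\Gamma$-edges and non-$E_Q$ edges of $G$), and $G$ is reduced (every $G$-edge is one of three types---a $\Gamma$-edge image, some $E_Q$, or a terminal edge to a receiver---and the covering hypothesis on $\Pi_R$ guarantees that each such edge is used by some $P_{S,R}$).

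The crux is to identify the coding points of $G$ with the edges $\{E_Q\}$. For a non-$E_Q$ edge $E'$ with tail $\tau(V_1)$, the vertex $\tau(V_1)$ is either a source (in-degree $0$) or a vertex $Q'^{+}$ with unique incoming edge $E_{Q'}$; in both cases the distinct-tails clause of Definition~\ref{d:coding_point} cannot be satisfied, so $E'$ is not a coding point. For each $E_Q$, the hypothesis that $Q$ has $\Gamma$-in-degree at least $2$ furnishes two distinct in-edges $e_1,e_2$, each covered by some $\pi_{S_i,R_i}$; the vertex-disjointness of $\Pi_{R_i}$ immediately forces $R_1\neq R_2$, since two paths in a common $\Pi_R$ cannot both visit $Q$. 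The main obstacle I foresee is arranging $S_1\neq S_2$: if every path through $Q$ emanated from a single source $S^{*}$, the remaining $|\mathcal{S}|-1$ sources would each have to reach every receiver while avoiding $Q$ entirely, and I expect to derive a contradiction from the in-degree hypothesis at $Q$'s $\Gamma$-ancestors combined with vertex-disjointness across the various $\Pi_R$. Once the coding-point identification is established, the bijection $V\leftrightarrow\tau(V)$ matches $G$-objects, coding-direct $G$-paths between them correspond to the $\Gamma$-edges, and the label sets $L_V$ are recovered from the endpoints of paths in each $\Pi_R$, yielding $\Gamma(G)=\Gamma$.
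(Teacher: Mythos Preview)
Your construction is exactly the one the paper uses (your $Q^{-},Q^{+}$ are its $Q^{\text{tail}},Q^{\text{head}}$), and your verifications of acyclicity, edge-disjointness, and reducedness mirror theirs. Where you go beyond the paper is in trying to pin down the coding points of $G$: the paper's entire argument for that step is the single sentence ``It is clear from Lemma~\ref{code-graph_lemma} that $\Gamma$ is the code graph of $G$.'' You are right to be uneasy here, and the ``main obstacle'' you flag is real.

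Unfortunately, the contradiction you hope to extract for $S_1\neq S_2$ does not exist. Take $\mathcal{S}=\{S_1,S_2\}$, $\mathcal{Q}=\{A,B,Q\}$, with $\Gamma$-edges $S_i\to A$, $S_i\to B$ ($i=1,2$) and $A\to Q$, $B\to Q$. Set $L_{S_1}=\{R_3,R_4\}$, $L_{S_2}=L_Q=\{R_1,R_2\}$, $L_A=\{R_3\}$, $L_B=\{R_4\}$, and
\[
\Pi_{R_1}=\{S_1AQ,\ S_2\},\quad \Pi_{R_2}=\{S_1BQ,\ S_2\},\quad \Pi_{R_3}=\{S_1,\ S_2A\},\quad \Pi_{R_4}=\{S_1,\ S_2B\}.
\]
All four hypotheses of the proposition hold, the in-degree condition is met at every ancestor of $Q$, and each $\Pi_R$ is vertex-disjoint; yet every path through $Q$ originates at $S_1$. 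Indeed, since $\mathbf{V}_{R_1}=\mathbf{V}_{R_2}=\{S_2,Q\}$ and $S_2$ has in-degree $0$, \emph{any} admissible system $\{\Pi_R\}$ is forced to send only $S_1$-paths through $Q$. In the constructed $G$ the edge $E_Q$ then fails the distinct-sources clause of Definition~\ref{d:coding_point} and is not a coding point, so $\Gamma(G)$ has only the two coding vertices $A,B$ and does not equal $\Gamma$; the same reasoning shows no multicast network has this $\Gamma$ as its code graph. Thus the obstacle you identified is not a defect of your outline but a genuine gap in the proposition as stated---one the paper's proof simply elides.
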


\begin{proof}
Construct a
directed graph $G$ as follows:
\begin{itemize}
\item Create one vertex
  of $G$ for every $S \in \mathcal{S}$ and for every $R \in
  \mathcal{R}$, and create two vertices in $G$, $Q^{\text{tail}}$
  and $Q^{\text{head}}$, for every $Q \in
  \mathcal{Q}$. In other words, writing $\mathcal{Q}^{\text{tail}} = \{Q^{\text{tail}} \, | \, Q \in \mathcal{Q}\}$ and $\mathcal{Q}^{\text{head}} = \{Q^{\text{head}} \, | \, Q \in \mathcal{Q}\}$, the vertex set of $G$ is $\mathcal{V} = \mathcal{S} \cup \mathcal{R} \cup \mathcal{Q}^{\text{tail}}
  \cup \mathcal{Q}^{\text{head}}$. 
\item The edge set $\mathcal{E}$ of $G$ consists of five types of edges:
\begin{enumerate}
\item[(a)] For each $Q \in \mathcal{Q}$, create an edge in $G$ from
  $Q^{\text{tail}}$ to $Q^{\text{head}}$.
\item[(b)] For each edge in $\Gamma$ from $S \in \mathcal{S}$ to
  $Q \in \mathcal{Q}$, create an edge in $G$ from $S$ to $Q^{\text{tail}}$.
\item[(c)] For each edge in $\Gamma$ from $Q_1 \in \mathcal{Q}$ to
  $Q_2 \in \mathcal{Q}$, create an edge in $G$ from
  $Q_1^{\text{head}}$ to $Q_2^{\text{tail}}$.
\item[(d)] Let $R \in \mathcal{R}$ and $S \in \mathcal{S}$.  If $R \in L_S$, create an edge in $G$ from $S$ to $R$.
\item[(e)] Let $R \in \mathcal{R}$ and $Q \in \mathcal{Q}$. If $R \in L_Q$, create an edge in $G$ from $Q^{\text{head}}$ to $R$.
\end{enumerate}
\end{itemize}
Then for each $R \in \mathcal{R}$, the set $\Pi_R$ of vertex-disjoint paths in $\Gamma$ induces a set $\mathcal{P}_R$ of edge-disjoint paths in $G$ from the elements of $\mathcal{S}$ to $R$, and $G = (\mathcal{V},\mathcal{E},\mathcal{S},\mathcal{R},\{\mathcal{P}_R \, | \, R \in \mathcal{R}\})$ is a multicast network. Moreover,
the code graph of $G$ is $\Gamma$.
 We must show:
\begin{enumerate}
\item $\mathcal{S}$ is precisely the set of vertices of $G$ of in-degree
  0;
\item $\mathcal{R}$ is precisely the set of vertices of $G$ of
  out-degree 0;
\item $G$ is a directed acyclic graph;
\item For each $R \in \mathcal{R}$, $\Pi_R$ induces a set of edge-disjoint
  paths $\mathcal{P}_R = \{P_{SR} \, | \, S \in \mathcal{S}\}$, where
  $P_{SR}$ is a path in $G$ from $S$ to $R$; and
\item The code graph of $G = (\mathcal{V},\mathcal{E},\mathcal{S},\mathcal{R},\{\mathcal{P}_R \, | \, R \in \mathcal{R}\})$ is $\Gamma$.
\end{enumerate}
We do each of these in turn.

\begin{enumerate}
\item By construction, every $S \in \mathcal{S}$ has in-degree 0 in
  $G$, and every $R \in \mathcal{R}$ and every $Q^{\text{head}} \in
  \mathcal{Q}^{\text{head}}$ has strictly positive in-degree in $G$.
  Let $Q^{\text{tail}} \in \mathcal{Q}^{\text{tail}}$ come from the
  vertex $Q$ of $\Gamma$.  If $Q^{\text{tail}}$ has in-degree 0, then
  $Q$ has in-degree 0, and so $Q \in \mathcal{S}$, a contradiction.
  Thus, $\mathcal{S}$ is precisely the set of vertices of $G$ of
  in-degree 0.

\item By construction, every $R \in \mathcal{R}$ has out-degree 0 in
  $G$.  For every $S \in \mathcal{S}$, there is a path in $\Gamma$ from
  $S$ to some vertex of $\Gamma$ labeled with $R$ for each $R \in
  \mathcal{R}$; in particular this means that the out-degree of $S$ in
  $G$ cannot be 0.  By construction, no $Q^{\text{tail}} \in
  \mathcal{Q}^{\text{tail}}$ can have out-degree 0 in $G$. Let
  $Q^{\text{head}} \in \mathcal{Q}^{\text{head}}$ come from the vertex
  $Q$ of $\Gamma$. Since every vertex and every edge of $\Gamma$ is contained $\pi_{S,R}$ for at least one pair $(S,R)$, there exist some $R \in \mathcal{R}$ such that either there is an edge in $\Gamma$ from $Q$ to some $Q'$ or $Q$ is labeled in $\Gamma$ with $R$. In
  either case, the out-degree of $Q^{\text{head}}$ in $G$ is strictly
  positive.  Hence, $\mathcal{R}$ is precisely the set of vertices of
  $G$ of out-degree 0.

\item Suppose there is a cycle in $G$.  Since every $S \in
  \mathcal{S}$ has in-degree 0 in $G$ and every $R \in \mathcal{R}$
  has out-degree 0 in $G$, this cycle can include only vertices in
  $\mathcal{Q}^{\text{tail}} \cup \mathcal{Q}^{\text{head}}$ and hence
  only edges of types (a) and (c).  This means that, writing the
  cycle in terms of its vertices, it must have the form
\[
Q_1^{\text{tail}}, Q_1^{\text{head}}, Q_2^{\text{tail}},
Q_2^{\text{head}}, \dots,  Q_n^{\text{tail}}, Q_n^{\text{head}},
Q_1^{\text{tail}}.
\]
This yields a cycle $Q_1, Q_2, \dots, Q_n, Q_1$ in
$\Gamma$, contradicting the fact that $\Gamma$ is acyclic.

\item Let $R \in \mathcal{R}$ and let $\Pi_R = \{\pi_{S,R} \, | \, S \in \mathcal{S}\}$ be the set of vertex-disjoint paths in $\Gamma$ from the sources to $R$.  Since every vertex in $\mathcal{S}$ has in-degree 0 and
every vertex in $\mathcal{R}$ has out-degree 0, for each $S \in \mathcal{S}$ there are vertices
$Q_{S,1}, \dots, Q_{S,t_S} \in \mathcal{Q}$ such that $Q_{S,t_S} =
V_S$ and that path $\pi_{S,R}$ in $\Gamma$ consists of the following vertices in order:
\[
S, Q_{S,1}, \dots, Q_{S,t_S-1}, Q_{S,t_S}.
\]
This gives the path $P_{S,R}$ in $G$ consisting of
\[
S, Q_{S,1}^{\text{tail}}, Q_{S,1}^{\text{head}},  \dots,
Q_{S,t_S-1}^{\text{tail}}, Q_{S,t_S-1}^{\text{head}},
Q_{S,t_S}^\text{tail}, Q_{S,t_S}^\text{head}, R.
\]
Since for $S \neq S'$ the paths $\pi_{S,R}$ and $\pi_{S',R}$ in $\Gamma$ are vertex-disjoint, we have that $P_{S,R}$ and $P_{S',R}$ are edge-disjoint paths in
$G$.  Thus 
\[
\mathcal{P}_R = \{P_{S,R} \, | \, S \in \mathcal{S}\}
\]
is the set of edge-disjoint paths in $G$ we seek.

\item It is clear from Lemma~\ref{code-graph_lemma} that
  $\Gamma$ is the code graph of $G$.
  
\end{enumerate}
\end{proof}
Note that the proof of Proposition \ref{code-graph_is_network_proposition} provides a method to construct a multicast network with a given code graph. 

\section{$\F_q$-vector labelings and matrices}\label{s:vector_labeling}

We begin this section by recalling the goal of linear network coding on a multicast network.  We have a set $\mathcal{S}$ of sources that transmit their messages along the edges of a network to a set $\mathcal{R}$ of receivers.  Certain edges, those corresponding to coding points, can be used by multiple paths between sources and receivers.  These edges combine inputs in such a way that the output can be decoded.

Messages sent by sources are represented by elements of $\F_q^{n}$. When incoming messages combine along a coding edge the output is a linear combination of these input vectors.  Each receiver must obtain the message sent by each source, which motivates  linear independence constraints on these vectors.

\begin{definition} 
Let $G$ be a multicast network with source set $\mathcal{S}$ and receiver set $\mathcal{R}$, and let $\Gamma=(\mathbf{V},\mathbf{E})$ be its corresponding code graph.  Each $V \in \mathbf{V}$ is labeled with a subset $L_V \subseteq \mathcal{R}$.  Let $\mathbf{V}_R$ denote the set of $V\in \mathbf{V}$ such that $R \in L_V$.

An \emph{$\mathbb{F}_q$-vector labeling} of $\Gamma$ is an assignment of elements of $\mathbb{F}_q^{|\mathcal{S}|}$ to the vertices of $\Gamma$ satisfying:
\begin{itemize}
\item The vectors assigned to the source nodes of the code graph are linearly independent.
\item For each $R \in \mathcal{R}$ the vectors assigned to the vertices of $\mathbf{V}_R$ are linearly independent.
\item The vector assigned to a coding point $Q \in \mathbf{V}$ is in the span of the vectors assigned to the tails of the directed edges terminating at $Q$.
\end{itemize}
\end{definition}

The discussion of the linear network coding problem in the introduction combined with the results of Section \ref{s:coding_pts} implies the following.

\begin{proposition}
Let $G$ be a multicast network and let $\Gamma$ be its corresponding code graph.  The capacity of $G$ is achievable over $\F_q$ if and only if there exists an $\F_q$-vector labeling of $\Gamma$.
\end{proposition}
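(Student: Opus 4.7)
The plan is to establish the two directions of the equivalence via a natural correspondence between $\F_q$-linear network coding schemes on $G$ that achieve capacity and $\F_q$-vector labelings of $\Gamma$. The key observation driving both directions is that if the $|\mathcal{S}|$ source messages are chosen to be the standard basis of $\F_q^{|\mathcal{S}|}$, then under any linear coding scheme each edge of $G$ carries a global encoding vector in $\F_q^{|\mathcal{S}|}$, and this vector is constant along every coding-direct path. Consequently each vertex $V$ of $\Gamma$, which corresponds to either a source or a coding point, inherits a single well-defined vector: for a source, the standard basis element; for a coding point, the vector carried on that edge of $G$.

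For the forward direction, I would start from a linear network coding scheme that achieves capacity and verify the three conditions of an $\F_q$-vector labeling. The source vectors are the standard basis, hence linearly independent. For a coding-point vertex $V$ of $\Gamma$ corresponding to $Q \in G$, the edges of $\Gamma$ incoming to $V$ correspond exactly to the maximal coding-direct paths in $G$ terminating at the tail of $Q$; by definition of a linear coding scheme, the message on $Q$ is an $\F_q$-linear combination of the messages arriving at its tail, i.e., of the vectors at the tails of the edges of $\Gamma$ into $V$. Finally, for each $R \in \mathcal{R}$, the edge-disjoint paths in $\mathcal{P}_R$ deliver to $R$ precisely the vectors labeling the vertices in $\mathbf{V}_R$, and the requirement that $R$ can decode all source messages is equivalent to the linear independence of these $|\mathcal{S}|$ vectors.

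For the reverse direction, given an $\F_q$-vector labeling of $\Gamma$, I would construct a coding scheme on $G$ by assigning to every edge of $G$ the vector of the vertex of $\Gamma$ whose $G$-object starts the unique maximal coding-direct path containing that edge. The local coding coefficients at each coding point are read off from an expression of its assigned vector as a linear combination of the vectors at the tails of its incoming edges in $\Gamma$, which is guaranteed to exist by the third labeling axiom. The source-vector independence gives the sources $|\mathcal{S}|$ independent messages, and at each receiver $R$ the messages arriving on the paths of $\mathcal{P}_R$ are exactly the vectors labeling $\mathbf{V}_R$; their linear independence lets $R$ invert the corresponding $|\mathcal{S}| \times |\mathcal{S}|$ system and recover all source messages, so capacity is achieved.

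The main obstacle is bookkeeping: for each coding point $Q \in G$, identifying the incoming edges at the tail of $Q$ in $G$ with the incoming edges at the corresponding vertex of $\Gamma$. This identification is precisely what the construction of $\Gamma$ is designed to encode, since each such edge in $G$ lies at the end of a unique maximal coding-direct path whose starting $G$-object produces an edge of $\Gamma$ into $V$; making the identification rigorous depends on the reducedness of $G$ and on the structure of coding-direct paths already developed in Lemma~\ref{code-graph_lemma}. Once this correspondence is pinned down, both directions amount to rewriting the same combining and decoding conditions in the two equivalent languages.
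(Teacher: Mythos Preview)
The paper does not actually prove this proposition; the sentence preceding it simply asserts that it follows from ``the discussion of the linear network coding problem in the introduction combined with the results of Section~\ref{s:coding_pts}.'' Your proposal supplies precisely the argument the paper leaves implicit, and the strategy---pass between linear solutions on $G$ and labelings of $\Gamma$ via global encoding vectors, exploiting that these are constant along coding-direct segments---is the right one.

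Two points deserve care. In the forward direction you write that the paths in $\mathcal{P}_R$ ``deliver to $R$ precisely the vectors labeling the vertices in $\mathbf{V}_R$''; this presupposes that the given capacity-achieving code merely routes along the final coding-direct leg of each path, which an arbitrary $\F_q$-linear solution need not do. The fix is a span argument: by induction on the topological order, the global encoding vector on every edge of $G$ lies in the span of the vectors at its nearest upstream sources and coding points, so everything reaching $R$ lies in the span of the vectors at $\mathbf{V}_R$; decodability forces this span to be all of $\F_q^{|\mathcal{S}|}$, and since $|\mathbf{V}_R|=|\mathcal{S}|$ by Lemma~\ref{code-graph_lemma} those vectors are independent. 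In the reverse direction, ``the unique maximal coding-direct path containing that edge'' is not literally well-defined, since a non-coding edge of $G$ can lie on coding-direct paths originating at different $G$-objects. It is cleaner to specify the local maps directly: at each coding point combine using coefficients witnessing the span condition, and at every other vertex forward along the chosen paths $\mathcal{P}_R$. With these adjustments your argument goes through.
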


\begin{question}
Let $G$ be a multicast network and let $\Gamma$ be its corresponding code graph.
\begin{enumerate}
\item What is the smallest $q$ such that an $\F_q$-vector labeling of $\Gamma$ exists?
\item Can we describe the set of $\F_q$-vector labelings of $\Gamma$ algebraically?
\end{enumerate}
\end{question}

The first question has been studied extensively in the literature; see for example \cite{li03,ko03,ho06,fr07}.  Previous work on this problem has taken the approach of labeling the messages for the source nodes of the multicast network directly.  When messages pass through a common edge of the network, the edge is assigned a transfer matrix, which takes a linear combination of the incoming vectors.  The condition that each receiver obtains the message from each source corresponds to the transfer matrices being invertible.  See \cite{ho06} for a characterization of when the capacity of a multicast network is achievable over $\F_q$ in terms of linear combinations of determinants of these matrices.  The definition of $\F_q$-vector labeling allows us to bypass transfer matrices and work directly with linear dependence and independence conditions of a single matrix.

We illustrate this idea with several examples.  For a collection of vectors $v_1,\ldots, v_k \in \F_q^n$, we write $\langle v_1,\ldots, v_k\rangle$ for their span. Let $\mathbf{M}_{m\times n}(q)$ denote the set of $m \times n$ matrices with entries in $\F_q$.
\begin{example}\label{butterfly_label}
Consider the butterfly network and its code graph $\Gamma$, which are given in Figure \ref{butterflynetworkwithcodegraph}.  There are two source nodes $S_1$ and $S_2$, and one coding node $Q$. We label each of these three vertices with a vector in $\F_q^2$. Call these $v_{S_1}, v_{S_2}$, and $v_Q$.  Then $v_{S_1}$ and $v_{S_2}$ must be linearly independent, and $v_Q \in \langle v_{S_1}, v_{S_2}\rangle$.  Since $S_1$ and $Q$ share the receiver label $R_1$, we have that $v_{S_1}$ and $v_Q$ are linearly independent.  Similarly, $S_2$ and $Q$ share the receiver label $R_2$, so $v_{S_2}$ and $v_Q$ are linearly independent.  

We see that for a fixed $q$, the set of $\F_q$-vector labelings of $\Gamma$ are in bijection with those elements of $\mathbf{M}_{2\times3}(q)$ satisfying that each of the three $2\times 2$ minors are nonzero.  This condition is equivalent to saying that no two columns are scalar multiples of each other.  There are precisely $(q-1)^4(q+1)q$ ways to choose such a matrix.  For $q=2$ these six ways correspond to the six ways to permute the columns of the matrix $\left(\begin{smallmatrix} 1 & 0 & 1 \\ 0 & 1 & 1 \end{smallmatrix}\right)$. 
\end{example}

In order to explain the counting formula from this example, we recall the definition of projective space over a finite field.  
\begin{definition}
Let $n \ge 1$ and $\F_q$ be a finite field.  The points $\Proj^n(\F_q)$ correspond to elements of $\F_q^{n+1} \setminus (0,\ldots, 0)$ up to the equivalence that $(a_0,\ldots, a_n) \sim (\alpha a_0,\ldots, \alpha a_n)$ for every $\alpha \in \F_q^*$.  
\end{definition}
We see that $\left| \Proj^n(\F_q) \right| = \frac{q^{n+1}-1}{q-1}$.  Two nonzero columns of a $2\times n$ matrix are linearly dependent if and only if they define the same point in $\Proj^1(\F_q)$.  So, in Example \ref{butterfly_label} we need only count the number of ways to pick three distinct points in $\Proj^1(\F_q)$, and then multiply by $(q-1)^3$ to account for possible scalings of each column.

\begin{example}\label{vector_label_ex2}
Let $\Gamma$ be the code graph from Figure \ref{5d}. There are $8$ vertices of $\Gamma$, four of which correspond to sources. Number these vertices so that \#1-4 are the nodes in the top row, in order; \#5-7 are the nodes in the middle row, in order; and \#8 is the bottom-most node.  An $\F_q$-vector labeling gives an element of $\mathbf{M}_{4\times 8}(q)$ with columns $v_1, \ldots, v_8$ satisfying the following conditions:
\begin{enumerate}
\item The labels of the source nodes are linearly independent, and the labels of every set of nodes labeled with a common receiver are linearly independent.  Therefore, the submatrices with the following sets of columns are invertible:
\[
\{v_1,v_2,v_3,v_4\}, \{v_1,v_4,v_6,v_8\}, \{v_1,v_2,v_7,v_8\}, \{v_2,v_3,v_4,v_8\}.
\]
\item The label of every vertex is in the span of the labels of the vertices that are tails of those directed edges terminating at that vertex, which implies:
\[
v_5 \in \langle v_1, v_6\rangle,\ v_6 \in \langle v_2, v_3\rangle,\ v_7 \in \langle v_3, v_4\rangle,\ v_8 \in \langle v_5, v_7\rangle.
\]
\end{enumerate}

Without loss of generality, we may label the source nodes so that the matrix with columns $v_1, v_2, v_3, v_4$ is the identity matrix.  This leads to a $4\times 8$ matrix of the form
\[
\left( 
\begin{array}{cccccccc}
1&0&0&0&x_5&0&0&x_8 \\
0&1&0&0&\alpha y_6&y_6&0&y_8 \\
0&0&1&0&\alpha z_6&z_6&z_7&z_8 \\
0&0&0&1&0&0&w_7&w_8
\end{array}
\right),
\]
subject to the additional constraint that $v_8 \in \langle v_5, v_7\rangle$, and that each of the matrices
\begin{eqnarray*}
M_{1468} & = &  \left(
\begin{array}{cccc}
1&0&0&x_8\\
0&0&y_6&y_8\\
0&0&z_6&z_8\\
0&1&0&w_8
\end{array}
\right),\\
M_{1278} & = &  \left(
\begin{array}{cccc}
1&0&0&x_8\\
0&1&0&y_8\\
0&0&z_7&z_8\\
0&0&w_7&w_8
\end{array}
\right), \quad \text{and} \\ 
M_{2348} & = &  \left(
\begin{array}{cccc}
0&0&0&x_8\\
1&0&0&y_8\\
0&1&0&z_8\\
0&0&1&w_8
\end{array}
\right)
\end{eqnarray*}
is invertible. This implies
\[
\det(M_{1468}) = z_6 y_8 - y_6 z_8 \neq 0,\ \ \det(M_{1278}) = z_7 w_8 - z_8 w_7 \neq 0,\ \ \det(M_{2348}) = x_8 \neq 0.
\]
The condition that $v_8 \in \langle v_5,v_7\rangle$ can be written as
\[
(x_8, y_8, z_8, w_8) = (\beta x_5, \beta \alpha y_6, \beta \alpha z_6, 0) + (0,0,\gamma z_7, \gamma w_7),
\]
for some $\beta, \gamma \in \F_q$.  

We now simplify these algebraic conditions.  Since $x_8 \neq 0$, we see that $x_5 \neq 0$ and $\beta = x_8/x_5$.  If $w_7 = 0$, then $w_8 = 0$, contradicting the condition that $\det(M_{1278}) = z_7 w_8 - z_8 w_7 \neq 0$.  Therefore, $\gamma = w_8/w_7$.  We conclude that 
\begin{eqnarray*}
y_8 &= &  \alpha y_6 (x_8/x_5),\\
z_8 & = & \alpha z_6  (x_8/ x_5) + (w_8/w_7) z_7.
\end{eqnarray*}
We now have 
\[
\det(M_{1278})  = z_7 w_8 - z_8 w_7 = z_7 w_8 - \alpha z_6  (x_8/ x_5) w_7 - w_8 z_7 = -\alpha z_6 w_7 (x_8/x_5) \neq 0,
\]
from which we conclude $\alpha, z_6, w_7, x_8, x_5 \neq 0$.  We also have
\begin{eqnarray*}
\det(M_{1468}) & = &  z_6 y_8 - y_6 z_8 = z_6 \alpha y_6 (x_8/x_5) - y_6 \alpha z_6  (x_8/ x_5) - y_6 (w_8/w_7) z_7\\ 
& = &  -y_6 z_7 (w_8/w_7) \neq 0,
\end{eqnarray*}
from which we conclude that $y_6,z_7, w_8, w_7 \neq 0$.  

Therefore, $\F_q$-vector labelings of $\Gamma$ where the source nodes are labeled with the standard basis vectors are in bijection with the set of nonzero choices for $x_5,y_6, z_6, z_7, w_7, w_8, \alpha$,  which implies that there are $(q-1)^7$ such labelings.  In particular, there exist such labelings over every finite field $\F_q$, and the unique labeling over $\F_2$ is given by the matrix 
\[
\left( 
\begin{array}{cccccccc}
1&0&0&0&1&0&0&1 \\
0&1&0&0&1&1&0&1 \\
0&0&1&0&1&1&1&0 \\
0&0&0&1&0&0&1&1
\end{array}
\right).
\]
\end{example}

Example \ref{vector_label_ex2} shows that even for relatively small code graphs the computations necessary to understand the set of $\F_q$-vector labelings can become intricate.  However, it is straightforward to find the set of $\F_q$-labelings for every particular $q$ using a computer algebra system. 

\subsection{$\F_q$-vector labelings and rational points}

The goal of this section is to understand the set of $\F_q$-vector labelings of a code graph $\Gamma$ in terms of the vanishing and nonvanishing of certain polynomial equations.
\begin{theorem}\label{fq_poly_thm}
Let $\Gamma$ be a code graph with $|\mathcal{S}|$ nodes corresponding to sources and $|\mathcal{Q}|$ nodes corresponding to coding points.  The set of $\F_q$-vector labelings of $\Gamma$ is in bijection with an open subset of a closed affine algebraic subset of $\mathbf{M}_{|\mathcal{S}|\times (|\mathcal{S}|+|\mathcal{Q}|)}(q)$.
\end{theorem}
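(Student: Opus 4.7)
The approach is to encode each $\F_q$-vector labeling as a point of $\mathbf{M}_{|\mathcal{S}|\times(|\mathcal{S}|+|\mathcal{Q}|)}(q)$ and then express the three defining conditions of an $\F_q$-vector labeling (source independence, receiver-set independence, and the span conditions at coding points) as either the vanishing or the non-vanishing of explicit polynomials in the matrix entries.

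First I would fix an ordering of the vertices of $\Gamma$ so that the $|\mathcal{S}|$ source vertices come first, followed by the $|\mathcal{Q}|$ coding-point vertices. A labeling $V_i \mapsto v_i \in \F_q^{|\mathcal{S}|}$ then corresponds to the matrix having $v_1, \ldots, v_{|\mathcal{S}|+|\mathcal{Q}|}$ as its columns, and the entries of this matrix serve as the coordinates on the ambient affine space $\mathbf{M}_{|\mathcal{S}|\times(|\mathcal{S}|+|\mathcal{Q}|)}$.

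Next I would translate each defining condition into polynomial conditions. Source independence is the non-vanishing of the single determinant of the leftmost $|\mathcal{S}| \times |\mathcal{S}|$ block. By Lemma~\ref{code-graph_lemma}, each $\mathbf{V}_R$ has exactly $|\mathcal{S}|$ vertices, so for each $R \in \mathcal{R}$ the receiver-independence condition is the non-vanishing of a single $|\mathcal{S}| \times |\mathcal{S}|$ minor, giving finitely many open conditions. For the span condition at a coding point $Q$ with in-neighbors $Q_1, \ldots, Q_k$ in $\Gamma$ (with $k \ge 2$ by Lemma~\ref{code-graph_lemma}), I would impose the vanishing of all $(k+1)\times(k+1)$ minors of the augmented $|\mathcal{S}| \times (k+1)$ matrix $[v_{Q_1} \mid \cdots \mid v_{Q_k} \mid v_Q]$; these are closed polynomial equations. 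Taking $V \subseteq \mathbf{M}_{|\mathcal{S}|\times(|\mathcal{S}|+|\mathcal{Q}|)}$ to be the closed affine algebraic subset cut out by all of these minor equations, and $U \subseteq V$ to be the principal open subset on which every source- and receiver-independence determinant is nonzero, I would then aim to show that $U(\F_q)$ is in bijection with the set of $\F_q$-vector labelings. One inclusion is immediate from the construction: every valid labeling satisfies each of these (in)equations.

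The main obstacle is the reverse inclusion. The minor-vanishing condition enforces only $\mathrm{rank}[v_{Q_1}\mid\cdots\mid v_{Q_k}\mid v_Q]\le k$, which is strictly weaker than $v_Q \in \langle v_{Q_1},\ldots,v_{Q_k}\rangle$ when $v_{Q_1},\ldots,v_{Q_k}$ are themselves linearly dependent. The hard part will be to argue that within the open locus $U$, the in-neighbors of every coding point are forced to be linearly independent, so that the rank bound genuinely coincides with the span condition. If this cannot be deduced purely from the source- and receiver-independence determinants, the remedy is to introduce auxiliary coordinates $a_{Q,i}$ for each coding point $Q$ with the relations $v_Q - \sum_i a_{Q,i} v_{Q_i} = 0$, cutting out a closed subset in a larger affine space $\mathbf{M}_{|\mathcal{S}|\times(|\mathcal{S}|+|\mathcal{Q}|)} \times \F_q^N$, and then verifying that the projection to $\mathbf{M}_{|\mathcal{S}|\times(|\mathcal{S}|+|\mathcal{Q}|)}$ is a bijection on the sublocus of interest, whose image can be described via elimination as an open subset of a closed subset of $\mathbf{M}_{|\mathcal{S}|\times(|\mathcal{S}|+|\mathcal{Q}|)}$.
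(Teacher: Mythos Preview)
Your setup---encoding a labeling as a matrix with one column per vertex, and expressing the source- and receiver-independence conditions as the nonvanishing of the relevant $|\mathcal{S}|\times|\mathcal{S}|$ minors---matches the paper's Lemma~\ref{lem1} exactly. You also correctly isolate the real obstacle: vanishing of all $(k+1)\times(k+1)$ minors of $[\,v_{Q_1}\mid\cdots\mid v_{Q_k}\mid v_Q\,]$ encodes only $\mathrm{rank}\le k$, which is strictly weaker than $v_Q\in\langle v_{Q_1},\ldots,v_{Q_k}\rangle$ when the in-neighbors are themselves dependent.

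Where your proposal falls short is in the two suggested remedies. Remedy (a) does not go through: nothing in the definition of an $\F_q$-vector labeling forces the in-neighbors of a coding point to be linearly independent, and the receiver-independence minors do not impose this either (two coding points feeding a third can legitimately carry parallel vectors if no receiver label links them). Remedy (b) also has a gap: the projection from $\mathbf{M}_{|\mathcal{S}|\times(|\mathcal{S}|+|\mathcal{Q}|)}\times\F_q^N$ down to the matrix space is \emph{not} a bijection on the locus of interest, for exactly the same reason---when $v_{Q_1},\ldots,v_{Q_k}$ are dependent, the coefficients $a_{Q,i}$ realizing a given $v_Q$ are non-unique. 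One can still conclude from Chevalley that the image is constructible, but your text asserts a bijection and then a single locally closed description, neither of which you have justified.

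The paper's resolution is different and is the idea you are missing: Lemma~\ref{lem2} stratifies the span condition by the rank $r$ of the in-neighbor matrix. For each fixed $r\in\{0,\ldots,s\}$, the set where $\mathrm{rank}[v_{Q_1}\mid\cdots\mid v_{Q_k}]=r$ and $\mathrm{rank}[v_{Q_1}\mid\cdots\mid v_{Q_k}\mid v_Q]=r$ is manifestly locally closed (the open rank-exactly-$r$ stratum inside the closed rank-$\le r$ determinantal locus of the augmented matrix), and the full span condition is the finite union over $r$. Combining these pieces with the open independence loci from Lemma~\ref{lem1} is how the paper assembles the description claimed in the theorem. So rather than trying to force the in-neighbors to be independent or passing to auxiliary coordinates, the clean move is to handle each possible rank of the in-neighbor system separately.
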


We break the proof into two lemmas.
\begin{lemma}\label{lem1}
Fix $\mathbf{i} := (i_1, \dots, i_m)$ with $1 \leq i_1 < \dots < i_m \leq n$.  For every $M \in \mathbf{M}_{m \times n}(q)$, write $V_{\mathbf{i}}(M)$ for the set of columns of $M$ indexed by the coordinates of $\mathbf{i}$.   Then $\mathbf{A}_\mathbf{i} := \{M \in \mathbf{M}_{m \times n}(q) \, | \, V_\mathbf{i}(M) \text{ is linearly independent}\}$ is the complement of a hypersurface in $\F_q^{mn}$.
\end{lemma}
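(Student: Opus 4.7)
The plan is to produce an explicit polynomial $f_{\mathbf{i}}$ in the coordinate ring of $\mathbf{M}_{m\times n}(q) \cong \F_q^{mn}$ whose vanishing locus is exactly the complement of $\mathbf{A}_{\mathbf{i}}$. The natural candidate is the $m\times m$ determinant of the submatrix of $M$ whose columns are indexed by $\mathbf{i}$. The key point is simply that $m$ vectors in $\F_q^m$ are linearly independent if and only if the square matrix they form has nonzero determinant.

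First I would introduce the $mn$ coordinate variables $x_{ij}$ ($1\le i\le m$, $1\le j\le n$) on $\mathbf{M}_{m\times n}(q) \cong \F_q^{mn}$, and set
\[
f_{\mathbf{i}}(x_{11},\dots,x_{mn}) := \det\bigl(x_{r,i_s}\bigr)_{1\le r,s\le m},
\]
which is a polynomial in $\F_q[x_{11},\dots,x_{mn}]$ (depending only on the $m^2$ variables in the columns $i_1,\dots,i_m$). Then for any $M \in \mathbf{M}_{m\times n}(q)$, the columns $V_{\mathbf{i}}(M)$ form an $m\times m$ matrix whose determinant is precisely $f_{\mathbf{i}}(M)$; hence $V_{\mathbf{i}}(M)$ is linearly independent if and only if $f_{\mathbf{i}}(M)\ne 0$. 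This gives
\[
\mathbf{A}_{\mathbf{i}} = \F_q^{mn} \setminus \bigl\{M \in \F_q^{mn} : f_{\mathbf{i}}(M) = 0\bigr\}.
\]

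To conclude that this is the complement of a \emph{hypersurface}, I need to check that $f_{\mathbf{i}}$ is a nonzero polynomial. This is immediate: the monomial $x_{1,i_1}x_{2,i_2}\cdots x_{m,i_m}$ appears in the Leibniz expansion of $f_{\mathbf{i}}$ with coefficient $1$, so $f_{\mathbf{i}}$ is not identically zero. (Equivalently, the matrix $M$ whose $(r,i_r)$ entry is $1$ for $r=1,\dots,m$ and whose other entries are $0$ satisfies $f_{\mathbf{i}}(M)=1$.) Therefore the zero set $\{f_{\mathbf{i}}=0\}$ is a genuine hypersurface in $\F_q^{mn}$, and $\mathbf{A}_{\mathbf{i}}$ is precisely its complement.

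There is no real obstacle here; the lemma is essentially a restatement of the determinantal criterion for linear independence, and the only content is making sure the determinant polynomial is nonzero so that its vanishing locus qualifies as a hypersurface rather than the entire ambient space. The same polynomial $f_{\mathbf{i}}$ will then feed into the proof of Theorem \ref{fq_poly_thm}, where the various linear independence conditions from the definition of an $\F_q$-vector labeling become nonvanishing conditions on finitely many such determinants, and the conditions that coding-point vectors lie in the span of their predecessors will cut out the closed affine algebraic subset.
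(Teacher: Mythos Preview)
Your proposal is correct and follows essentially the same argument as the paper: identify $\mathbf{M}_{m\times n}(q)$ with $\F_q^{mn}$ and observe that the determinant of the $m\times m$ submatrix with columns indexed by $\mathbf{i}$ is a polynomial in the matrix entries whose vanishing locus is exactly the complement of $\mathbf{A}_{\mathbf{i}}$. Your version is slightly more explicit (naming coordinate variables and verifying that the determinant polynomial is nonzero), but the idea and the conclusion are identical.
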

\begin{proof}
Note that $\mathbf{M}_{m\times n}(q)$ is identified with the affine space $\F_q^{mn}$. The determinant of the $m \times m$ submatrix with columns given by $V_{\mathbf{i}}(M)$ is a polynomial of degree $m$ in the matrix entries.  Setting this polynomial equal to zero gives a hypersurface in $\F_q^{mn}$.
\end{proof}

\begin{lemma}\label{lem2}
Let $\mathbf{s} := (i_1,\ldots, i_s)$ with $1\le i_1 <\cdots < i_s\le n$ and fix $j \not\in \mathbf{s}$ with $1\le j \le n$.  For every $M \in \mathbf{M}_{m \times n}(q)$, write $V_{\mathbf{s}}(M)$ for the set of columns of $M$ indexed by the coordinates of $\mathbf{s}$ and let $v_j$ denote the $j$\textsuperscript{th} column of $M$.  Then $\mathbf{A}_{\mathbf{s},j} := \{M \in \mathbf{M}_{m \times n}(q) \, | \, v_j \in \langle V_{\mathbf{s}}(M)\rangle\}$ is an open subset of a closed subvariety of $\F_q^{mn}$.  That is, there are two collections of polynomials $\{f_1,\ldots, f_k\}$ and $\{g_1,\ldots, g_l\}$ in the $(s+1)m$ entries of the vectors $v_1,\ldots, v_s, w$ such that $w \in \langle v_1,\ldots, v_s\rangle$ if and only if each $f_i$ simultaneously vanishes and none of the $g_j$ vanish.
\end{lemma}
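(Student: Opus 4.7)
The plan is to characterize $\mathbf{A}_{\mathbf{s},j}$ via rank conditions on the matrices built from the columns indexed by $\mathbf{s}$ and from the column $v_j$. The starting point is the linear algebra equivalence that $w\in\langle v_1,\ldots,v_s\rangle$ if and only if $\mathrm{rank}[v_1|\cdots|v_s|w]=\mathrm{rank}[v_1|\cdots|v_s]$; since appending a column raises the rank by at most one, this reduces to the condition that the augmented rank not exceed the unaugmented rank.

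First I would take the $f_i$'s to be the collection of all $(s+1)\times(s+1)$ minors of the $m\times(s+1)$ augmented matrix $[v_1|\cdots|v_s|w]$, which are polynomials of degree $s+1$ in the $(s+1)m$ coordinates. Their simultaneous vanishing is equivalent to $\mathrm{rank}[v_1|\cdots|v_s|w]\leq s$, mirroring the argument of Lemma~\ref{lem1}. Next I would take the $g_j$'s to be the $s\times s$ minors of $[v_1|\cdots|v_s]$; asking that none of these vanish forces in particular at least one to be nonzero, so $\mathrm{rank}[v_1|\cdots|v_s]=s$. When both conditions hold the two matrices have equal rank $s$, and so $w\in\langle v_1,\ldots,v_s\rangle$. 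In the other direction, when $v_1,\ldots,v_s$ are linearly independent and $w$ lies in their span, the augmented matrix already has rank $s$ and hence every $(s+1)\times(s+1)$ minor vanishes.

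The subtlety I expect to be most delicate is the converse in full generality: if $w\in\langle v_1,\ldots,v_s\rangle$ but $v_1,\ldots,v_s$ happen to be linearly dependent, then some $g_j$ vanishes and the proposed locally closed description misses this point. Fundamentally, $\mathbf{A}_{\mathbf{s},j}$ is constructible but not always locally closed; its natural stratification is indexed by $r:=\mathrm{rank}[v_1|\cdots|v_s]$, with each stratum cut out by the $(r+1)\times(r+1)$ minors of the augmented matrix vanishing and an $r\times r$ minor of the unaugmented one being nonzero. Fortunately, in the setting of Theorem~\ref{fq_poly_thm} the linear independence constraints coming from Lemma~\ref{lem1} applied to the source vectors and to each $\mathbf{V}_R$ force the incoming spanning vectors at every coding point to be linearly independent, so the single-stratum description above is exactly what is needed to cut out the $\F_q$-vector labelings inside an affine space of matrices.
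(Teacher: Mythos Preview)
Your rank stratification---indexing by $r=\operatorname{rank}[v_1|\cdots|v_s]$ and cutting out each stratum by the vanishing of the $(r+1)\times(r+1)$ minors of the augmented matrix together with the nonvanishing of some $r\times r$ minor of the unaugmented one---is exactly the paper's argument. Two small corrections: in your single-stratum warm-up the open condition should be that \emph{at least one} $s\times s$ minor is nonzero, not that all are (the lemma's phrasing ``none of the $g_j$ vanish'' is already slightly loose here); and the paper does not rely on your final claim that the parents of every coding point carry independent vectors---it simply takes the union over all strata, which is what Theorem~\ref{fq_poly_thm} actually needs.
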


\begin{proof}
In order to prove this lemma, we need only prove it subject to the additional condition that $\dim\left(\langle V_{\mathbf{s}}(M) \rangle\right) = r$. Taking a union of sets of this form for each $r \in \{0,\ldots, s\}$ completes the proof.

If the matrix with columns $V_{\mathbf{s}}(M)$ has rank $r$, then $v_j \in \langle V_{\mathbf{s}}(M)\rangle\rangle$ if and only if the matrix with columns $V_{\mathbf{s}}(M), v_j$ also has rank $r$.  This condition holds if and only if each of the $(r+1) \times (r+1)$ minors of this $m \times (s+1)$ matrix simultaneously vanishes.  The set of $m \times (s+1)$ matrices of rank $r$ is an open subset of the closed subvariety of $m \times (s+1)$ matrices of rank at most $r$.
\end{proof}
\noindent For more information on these types of \emph{determinantal varieties}, see Lecture 9 of \cite{harris}.

\begin{proof}[Proof of Theorem \ref{fq_poly_thm}]
Suppose $\Gamma$ is a code graph with $|\mathcal{S}|$ nodes corresponding to sources and $|\mathcal{Q}|$ nodes corresponding to coding points.  An $\F_q$-vector labeling of $\Gamma$ gives an element $M \in \mathbf{M}_{|\mathcal{S}| \times (|\mathcal{S}|+|\mathcal{Q}|)}(q)$ such that a finite collection of subsets of $|\mathcal{S}|$ columns are linearly independent, and there are finitely many collections of subsets of columns $V_{\mathbf{s}}(M), v_j$ such that $v_j \in \langle V_{\mathbf{s}}(M)\rangle$.  Combining Lemmas \ref{lem1} and \ref{lem2} completes the proof.
\end{proof}

\subsection{$\F_q$-vector labelings and Grassmannians}\label{sec_grass}

Suppose $\Gamma$ is a code graph with $|\mathcal{S}|$ vertices corresponding to sources and $|\mathcal{Q}|$ vertices corresponding to coding points.  Let $M \in \mathbf{M}_{|\mathcal{S}| \times (|\mathcal{S}|+|\mathcal{Q}|)}(q)$ be the matrix corresponding to an $\F_q$-vector labeling of $\Gamma$.  Scaling every column of $M$ gives another $\F_q$-vector labeling, so in order to completely understand the set of $\F_q$-labelings we need only understand this set up to scalings of columns.  

Let $C \subseteq \F_q^{|\mathcal{S}|+|\mathcal{Q}|}$ be the subspace spanned by the rows of $M$.  Choosing a different basis for this subspace gives another $\F_q$-labeling of $\Gamma$.  This motivates studying $\F_q$-labelings in terms of subsets of a Grassmannian.

\begin{definition}
Let $V$ be a vector space over a field.  For positive integers $k\le n$, let $G(k,n)$ be the set of $k$-dimensional linear subspaces of $V$.  This set has the structure of an algebraic variety and is called the \emph{Grassmannian}.
\end{definition}
\noindent For a brief introduction to the Grassmannian, see Lecture 6 of \cite{harris}. We recall some notation from \cite{skorobogatov}.  

\begin{definition}
Let $G(k,n)$ be the Grassmannian of $k$-dimensional subspaces of $\F_q^n$.  Let $e_i$ be the vector whose $i$\textsuperscript{th} coordinate is $1$, and the other coordinates are zero.  For $I \subseteq \{1,\ldots, n\}$ let $W_I = \langle \{e_i\}_{i \in I}\rangle$. Let $f$ be a function from the subsets of $\{1,\ldots, n\}$ to $\Z_{\ge 0}$.  

Let $U_f(k,n)$ be the set of vector subspaces $C \subset \F_q^n$ such that for all $I \subset \{1,\ldots, n\},\ \dim(C \cap W_I) = f(I)$.  There is a stratification of $G(k,n)$ given by
\[
G(k,n) = \bigcup_f U_f(k,n).
\]

\end{definition}
\noindent See \cite{skorobogatov} for some history of the study of this stratification.

An $\F_q$-vector labeling of $\Gamma$ gives an element of $M \in \mathbf{M}_{|\mathcal{S}| \times (|\mathcal{S}|+|\mathcal{Q}|)}(q)$. Since the columns corresponding to the source nodes are linearly independent, the row span of $M$ is $|\mathcal{S}|$-dimensional, and $M$ gives an element of $G\left(|\mathcal{S}|, |\mathcal{S}|+|\mathcal{Q}|\right)$.
\begin{proposition}\label{strata_prop}
Let $\Gamma$ be a code graph with $|\mathcal{S}|$ nodes corresponding to sources and $|\mathcal{Q}|$ nodes corresponding to coding points.  The $\F_q$-vector labelings of $\Gamma$  are in bijection, up to a choice of basis, with the $\F_q$-points of 
\[
\bigcup_f U_f\left(|\mathcal{S}|, |\mathcal{S}|+ |\mathcal{Q}|\right)
\]
for some union of functions $f$ from the subsets of $\left\{1,\ldots, |\mathcal{S}|+|\mathcal{Q}|\right\}$ to $\Z_{\ge 0}$ determined by $\Gamma$.
\end{proposition}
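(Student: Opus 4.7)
The plan is to translate each defining condition of an $\F_q$-vector labeling into a constraint on the dimension of the intersection of the row span $C$ with a coordinate subspace $W_I$, and then argue that the valid locus in $G\left(|\mathcal{S}|, |\mathcal{S}|+|\mathcal{Q}|\right)$ is a finite union of strata $U_f$.

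First I set up the bijection between labelings up to change of basis and points of the Grassmannian. Given a labeling $M \in \mathbf{M}_{|\mathcal{S}| \times (|\mathcal{S}|+|\mathcal{Q}|)}(q)$, the linear independence of the source columns forces $\mathrm{rank}(M) = |\mathcal{S}|$, so the row span $C$ is a point of $G\left(|\mathcal{S}|, |\mathcal{S}|+|\mathcal{Q}|\right)$. Left multiplying $M$ by an invertible matrix preserves both $C$ and all linear dependence relations among the columns of $M$, so $M \mapsto C$ descends to a well-defined injection from labelings modulo change of basis into the Grassmannian, with inverse given by writing any chosen basis of $C$ as the rows of a matrix.

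The key computation is a rank identity: for any $J \subseteq \{1,\ldots, |\mathcal{S}|+|\mathcal{Q}|\}$, applying rank--nullity to the coordinate projection $C \to \F_q^{|J|}$ yields
\[
\mathrm{rank}(M_J) \;=\; |\mathcal{S}| \;-\; \dim(C \cap W_{J^c}),
\]
where $M_J$ denotes the column submatrix of $M$ indexed by $J$. Using this, I translate each defining condition on an $\F_q$-vector labeling as follows: the linear independence of the source columns is equivalent to $\dim(C \cap W_{I_S}) = 0$, where $I_S$ is the complement of the source indices; for each $R \in \mathcal{R}$, the linear independence of the columns indexed by $\mathbf{V}_R$ is equivalent to $\dim(C \cap W_{\mathbf{V}_R^c}) = 0$; and for a coding point $Q$ whose incoming edges have tails $T_1,\ldots,T_k$, the containment $v_Q \in \langle v_{T_1},\ldots, v_{T_k}\rangle$ is equivalent to $\mathrm{rank}(M_T) = \mathrm{rank}(M_{T \cup \{Q\}})$, hence to $\dim(C \cap W_{T^c}) = \dim(C \cap W_{(T \cup \{Q\})^c})$, where $T = \{T_1,\ldots, T_k\}$.

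Each of these conditions either fixes or equates values of $f(I) := \dim(C \cap W_I)$ on a specific finite family of subsets $I$, and imposes no restriction on $f(I)$ for any other $I$. The collection of functions $f$ compatible with the full list of constraints is therefore a (finite) subset of those indexing the strata, and the union of the corresponding $U_f\left(|\mathcal{S}|, |\mathcal{S}|+|\mathcal{Q}|\right)$ is exactly the locus of row spans of valid labeling matrices. I expect the only subtle step to be the span condition at a coding point, because it constrains a difference of two dimensions rather than fixing a single value to zero; the rank identity above absorbs this subtlety, reducing it uniformly to an equality between two prescribed values of $f$.
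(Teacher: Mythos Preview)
Your argument is correct and follows the same strategy as the paper: pass from the labeling matrix $M$ to its row span $C \in G(|\mathcal{S}|,|\mathcal{S}|+|\mathcal{Q}|)$, then rewrite each independence and span condition as a constraint on values $f(I) = \dim(C \cap W_I)$, so that the valid locus is a finite union of strata $U_f$.

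The one genuine difference is how the translation is carried out. The paper states directly that the independence condition on a column set $\mathbf{i}$ becomes $\dim(C \cap W_{\mathbf{i}}) = |\mathcal{S}|$ and that the span condition becomes $\dim(C \cap W_{\mathbf{s}}) = \dim(C \cap W_{\mathbf{s}\cup j})$, without deriving these. You instead prove the rank--nullity identity $\mathrm{rank}(M_J) = |\mathcal{S}| - \dim(C \cap W_{J^c})$ and use it to translate both conditions via the \emph{complements} $J^c$. Your route is the more robust one: the identity makes the equivalence transparent, and in fact the paper's stated translations are off by a complement (for instance, $\dim(C \cap W_{\mathbf{i}}) = |\mathcal{S}|$ would force $C = W_{\mathbf{i}}$, which is not what is wanted). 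Your formulation also handles the span condition cleanly as a single equality $f(T^c) = f((T\cup\{Q\})^c)$ rather than as a family of implications indexed by the possible ranks $r$.
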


\begin{proof}
Let $M \in \mathbf{M}_{|\mathcal{S}| \times (|\mathcal{S}|+|\mathcal{Q}|)}(q)$ be the matrix corresponding to an $\F_q$-labeling of $\Gamma$. Let $v_i$ denote the $i$\textsuperscript{th} column of $M$.  Let $C \subseteq \F_q^{|\mathcal{S}|+|\mathcal{Q}|}$ be the subspace generated by the rows of $M$.

There are two types of conditions required for $M$ to correspond to an $\F_q$-vector labeling of $\Gamma$: 
\begin{enumerate}
\item Subsets $\mathbf{i} := (i_1, \dots, i_m)$ with $1 \leq i_1 < \dots < i_m \leq n$ such that the set of columns $V_{\mathbf{i}}(M)$ indexed by the coordinates of $\mathbf{i}$ gives an invertible $|\mathcal{S}|\times |\mathcal{S}|$ matrix.

\item Subsets $\mathbf{s} := (i_1,\ldots, i_s)$ with $1\le i_1 <\cdots < i_s\le n$ and $j \not\in \mathbf{s}$ with $1\le j \le n$ such that $v_j \in \langle V_{\mathbf{s}}(M)\rangle$, where $V_{\mathbf{s}}(M)$ denotes the set of columns of $M$ indexed by the coordinates of $\mathbf{s}$

\end{enumerate}

The first condition is equivalent to $\dim(C \cap W_{\mathbf{i}}) = |\mathcal{S}|$.  The second condition says that if $\dim(C \cap W_{\mathbf{s}}) = r$ then $\dim(C \cap W_{\mathbf{s} \cup j}) = r$.  Each condition can be expressed in terms of the function $f$ for which $C \in U_f(|\mathcal{S}|, |\mathcal{S}|+|\mathcal{Q}|)$.
\end{proof}

Once $n$ is large, analyzing how the number of $\F_q$-points of $U_f(k,n)$ varies as a function of $q$ becomes intricate.  Let $p$ be a prime. Given $k \ge 3$ and any closed projective algebraic set over $\F_p$, a version of Mn\"ev's Universality Theorem says that there exists $n$ and $f$ such that $X$ is isomorphic to the Zariski closure of $U_f(k,n)$ under the action of the diagonal matrices \cite{skorobogatov}.  This implies that for large values of $k$ and $n$ the function counting $\F_q$-points of $U_f(k,n)$ can become difficult to understand.  In Examples \ref{butterfly_label} and \ref{vector_label_ex2} we saw code graphs for which the number of $\F_q$-labelings was given by a polynomial in $q$.  We give one more example to show that this does not always occur and that the number of $\F_q$-vector labelings of $\Gamma$ does not have to increase monotonically with $q$.

Note that the multicast network in the example is not reduced and the presented code graph does not have the minimal number of coding points. Nonetheless, this example demonstrates interesting algebraic properties and has already occurred in the network coding literature. 
\begin{example}\cite[Sun, Yin, Li, Long]{su15}

\begin{figure}[htb]
\centering
\begin{subfigure}{.48\textwidth}
\centering
\scalebox{.29}{\includegraphics{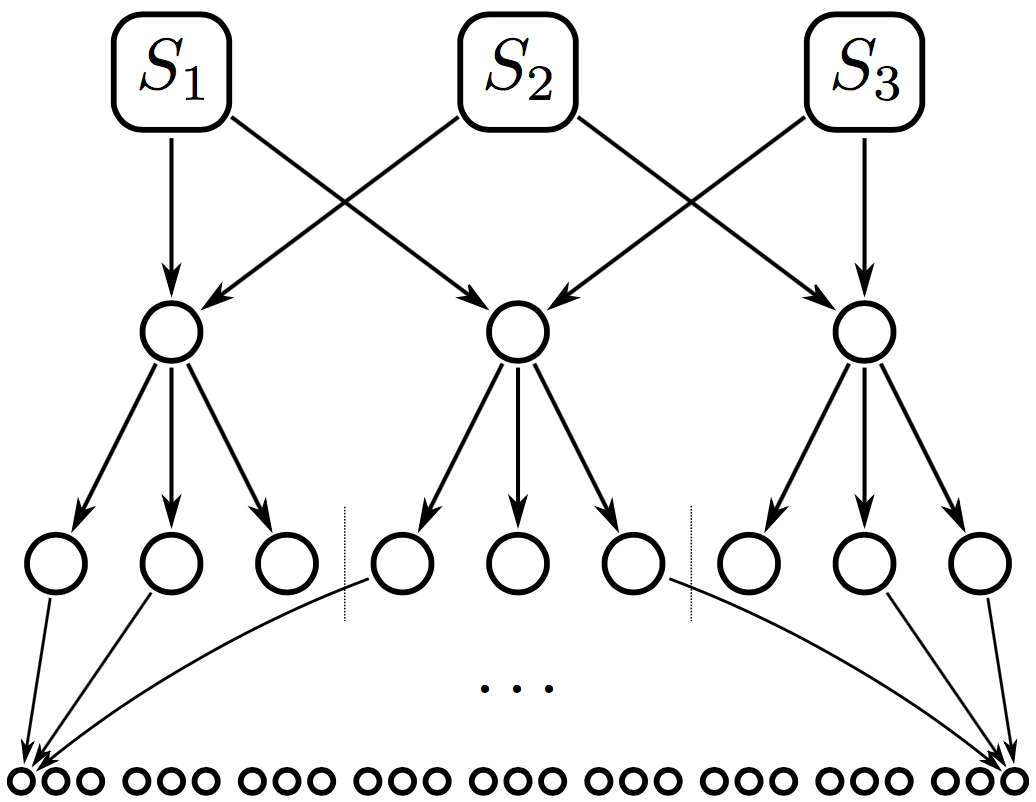}}
\caption{Combination network.}
\label{f:combination}
\end{subfigure}\hfill
\begin{subfigure}{.48\textwidth}
\centering
\scalebox{.29}{\includegraphics{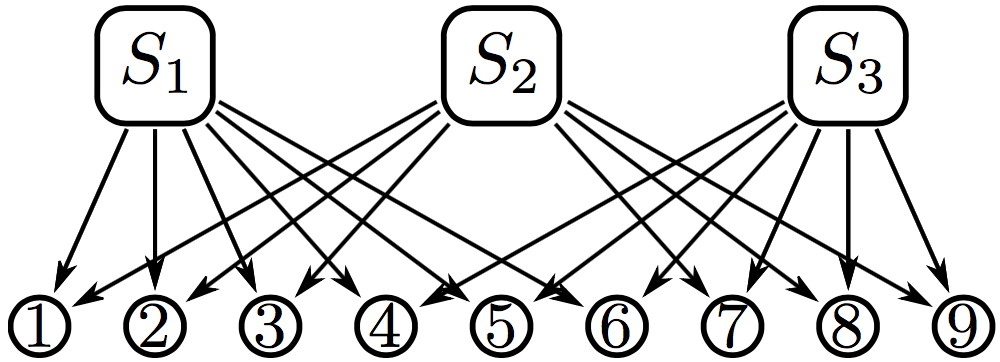}}
\caption{Code Graph.}
\label{f:combination_codegraph}
\end{subfigure}
\caption{Combination network and its code graph.}
\end{figure}

We consider a particular example of a class of networks called \emph{combination networks}.  These networks are described in general in \cite{su15}, and we consider the example for $m = 1$.  This is a multicast network with $3$ sources, $9$ coding points, and $81$ receivers. This leads to the code graph $\Gamma$ of Figure \ref{f:combination_codegraph} where the source nodes are labeled with the empty set and each coding node $i\in \{1,\dots, 9\}$ is labeled with a set $L_i$ of $27$ receivers such that:
\begin{itemize}
\item $|\mathbf{V}_R|=3$ for any $R\in \mathcal{R}$, where $\mathbf{V}_R$ is the set of coding nodes labeled with receiver $R$; 
\item $L_1\cap L_2 \cap L_3=L_4\cap L_5 \cap L_6=L_7\cap L_8 \cap L_9=\emptyset$.
\end{itemize} 

Choosing the standard basis vectors for the labels of the three source nodes, and scaling each column so that its first nonzero entry is equal to $1$, leads to an element of $\mathbf{M}_{3\times 12}(q)$ of the form
\[
\left( 
\begin{array}{cccccccccccc}
1&0&0&1&1&1&0&0&0&1&1&1 \\
0&1&0&\alpha_1&\alpha_2&\alpha_3&1&1&1& 0&0&0 \\
0&0&1&0&0&0&\beta_1&\beta_2&\beta_3&\gamma_1&\gamma_2&\gamma_3 
\end{array}
\right),
\]
subject to the constraints that each $\alpha_i, \beta_i, \gamma_i \in \F_q^*,\ 1\le i \le 3$ and
\[
\alpha_i \neq \alpha_j,\ \   \beta_i \neq \beta_j,\ \   \gamma_i \neq \gamma_j,\  \ \forall 1\le i<j\le 3
\]
\[
\{\gamma_1, \gamma_2, \gamma_3 \}  \subseteq  \F_q^* \setminus \{-\alpha_i \beta_j \ \mid  1\le i,j \le 3\}.
\]

The $\F_q$-vector labelings of $\Gamma$ are therefore in bijection with the complement of a (reducible) hypersurface in a $9$-dimensional affine space over $\F_q$.  Using a computer algebra system such as Sage or Magma, one can compute the number of $\F_q$-vector labelings over small finite fields.  If there were a polynomial formula for the number of such labelings it would have degree at most $9$.  Computing the number of $\F_q$-vector labelings for each $q\le 40$ shows that no such polynomial formula exists.

\end{example}

\subsection{An open question on $\F_q$-vector labelings of code graphs}

Let $G$ be a multicast network with $N$ receivers and code graph $\Gamma$.  In the introduction we mentioned the result of Ho \emph{et al.}, which implies that the capacity of $G$ is achievable over all finite fields $\F_q$ with $q > N$ \cite{ho06}. The proof of this theorem shows that there is an $\F_q$-vector labeling of $\Gamma$ for all $q> N$. In particular, this means that for every multicast network and every prime $p$ there exists a valid $\F_{p^k}$-vector labeling for all sufficiently large $k$.

For a matrix $M \in \mathbf{M}_{m \times n}(q)$, let $v_i$ denote the $i$\textsuperscript{th} column of $M$.
\begin{question}
Consider a finite collection $A$ of $m$-element subsets of $\{1,\ldots, n\}$ and a finite collection $B$ of pairs $(\mathbf{s},k)$ such that $\mathbf{s}$ is a subset of $\{1,\ldots, n\}$ not including $k$.  Proposition \ref{strata_prop} shows that the set of matrices $M \in \mathbf{M}_{m \times n}(q)$, up to a choice of basis, such that each element of $A$ gives a linearly independent set of columns, and each subset $B$ gives a set of columns where $v_k$ is in the span of the columns corresponding to $\mathbf{s}$, leads in a natural way to finite unions of sets of the form $U_f(m,n)$.

Which unions of this form can arise from the set of $\F_q$-labelings of the code graph of a multicast network?
\end{question}

We argue that not every union of sets of the form $U_f(k,n)$ can arise from a multicast network.  We describe a finite collection of algebraic conditions on the columns of a $3 \times 7$ matrix such that the number of matrices $M \in \mathbf{M}_{3\times 7}(q)$ satisfying these conditions is zero for all odd $q$, but is nonzero for all even $q$.  These conditions come from the \emph{Fano plane}, a special configuration of $7$ points and $7$ lines, with each line containing exactly $3$ of the points.  Figure \ref{Fano} represents this incidence structure of the points and lines of the Fano plane. The circle connecting points $\{1,2,3\}$ counts as a line.
\begin{figure}[h]
\centering
\scalebox{.27}{\includegraphics{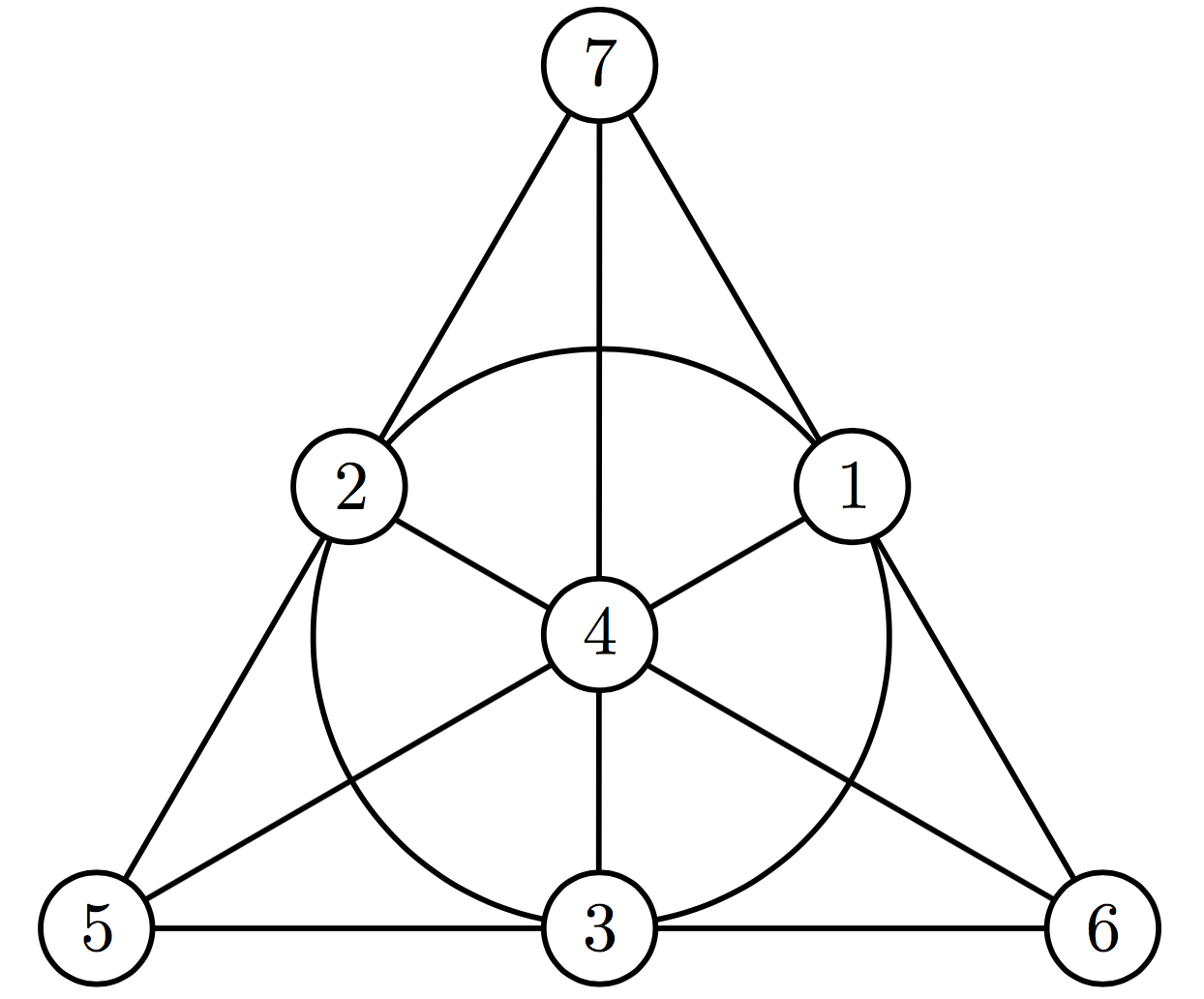}}
\caption{The Fano plane, a special configuration of $7$ points and $7$ lines.}
\label{Fano}
\end{figure}

Let $M \in \mathbf{M}_{3\times 7}(q)$ have columns $v_1, \ldots, v_7$. We consider the set of $M$ such that the following conditions hold:
\begin{eqnarray*}
v_3 & \in &  \langle v_5, v_6 \rangle,\ v_2 \in \langle v_5, v_7\rangle,\ v_1 \in \langle v_6, v_7\rangle,\\ 
v_3 & \in &  \langle v_7, v_4\rangle,\ v_1 \in \langle v_4,v_5 \rangle,\ v_2 \in \langle v_4, v_6\rangle,\ v_3 \in \langle v_1, v_2\rangle,
\end{eqnarray*}
and all other collections of $3$ vectors $v_i, v_j, v_k$ not occuring in this list are linearly independent; in other words, the linear dependencies are determined by the lines of the Fano plane.

Choosing an appropriate basis for the $3$-dimensional subspace of $\F_q^7$ spanned by the rows of $M$ we may assume that 
\[
v_5 = (1,0,0),\ v_6 = (0,1,0),\ v_7 = (0,0,1),\ v_4 = (1,1,1).
\]
Since $v_3 \in \langle v_5,v_6\rangle \cap \langle v_4, v_7\rangle$, multiplying by an appropriate scalar we may further assume that $v_3 = (1,1,0)$.  Similarly, we may assume that $v_1 = (0,1,1)$ and $v_2 = (1,0,1)$.  The condition that $v_3 \in \langle v_1, v_2\rangle$ implies that, in $\F_q$, 
\[
0 = \det \left( 
\begin{array}{ccc}
1 & 1 & 0 \\
1 & 0 & 1 \\
0 & 1 & 1 
\end{array}
\right) = -2.
\]
Such a matrix can only arise in characteristic $2$, a result that is discussed in detail in the projective geometry literature; see, e.g., \cite{grunbaum}.  Since the matrix problem specified by the Fano plane does not have solutions over all sufficiently large finite fields $\F_q$, it cannot `come from' the $\F_q$-vector labelings of a code graph.

\newcommand{\etalchar}[1]{$^{#1}$}

\end{document}